\newtheorem{thm}{Theorem}[section]
\newtheorem{lem}[thm]{Lemma}
\newtheorem{cor}[thm]{Corollary}
\newtheorem{prb}[thm]{Problem}
\theoremstyle{definition}
\newtheorem{dfn}{Definition}[section]
\DeclareMathOperator{\wt}{wt}
\begin{document}

\title{Lie algebras of heat operators\\ in nonholonomic frame}
\author{V.\,M.~Buchstaber, E.\,Yu.~Bunkova}
\address{Steklov Mathematical Institute of Russian Academy of Sciences, Moscow, Russia}
\email{bunkova@mi.ras.ru}
\keywords{Heat conduction operators, the grading, Cole--Hopf transformations, Lie--Rinehart algebras, polynomial Lie algebras, differentiation of Abelian functions over parameters.}
\thanks{E.\,Yu.~Bunkova is supported in part by Young Russian Mathematics award}

\begin{abstract}
Lie algebras of systems of $2 g$ graded heat conduction operators $Q_{2k}$, where $k = 0,1, \ldots,2 g-1$,
determining sigma functions $\sigma(z, \lambda)$ of genus $g = 1,2$, and $3$ hyperelliptic curves are constructed. As a corollary, it is found that a system of three operators~$Q_0, Q_2$ and $Q_4$ is already sufficient to determine the sigma functions.  The operator $Q_0$ is the Euler operator, and each of the operators $Q_{2k}$, $k>0$, determines a $g$-dimensional Schr\"odinger equation with quadratic potential in $z$ for a nonholonomic frame of vector fields in $\mathbb{C}^{2g}$ with coordinates $\lambda$.

An analogy of the Cole--Hopf transformation is considered.
It associates with each solution $\varphi(z, \lambda)$ of a linear system of heat equations a system of~nonlinear equations for~the~vector function $\nabla \ln \varphi(z, \lambda)$, where $\nabla$ is the gradient of the function in~$z$.

For any solution $\varphi(z, \lambda)$  of the system of heat equations the graded ring $\mathcal{R}_{\varphi}$ is introduced. It is generated by the logarithmic derivatives of the function $\varphi(z, \lambda)$ of order of~at~least $2$. The Lie algebra of derivations of the ring $\mathcal{R}_{\varphi}$ is presented explicitly. The~interrelation of this Lie algebra with the system of~nonlinear equations is shown.
In the case when $\varphi(z, \lambda) = \sigma(z, \lambda)$, this leads to a known result of constructing Lie algebras of derivations of~hyperellitic functions of genus $g = 1,2,3$.
\end{abstract}

\maketitle

\section*{Introduction} 

In \cite{Nonhol}, systems of $2g$ heat equations in nonholonomic frame are introduced for any hyperelliptic curve of genus $g$. These systems are described using graded operators $Q_0, Q_2, \ldots, Q_{4g-2}$, where $Q_0$ is the Euler operator that defines the weights of variables
$z=(z_1, z_3, \ldots, z_{2g-1})$, $\wt z_k = -k$, and $\lambda = (\lambda_4, \lambda_6, \ldots, \lambda_{4g+2})$,  $\wt \lambda_k = k$, and each of~the~operators $Q_{2k}$, $k>0$, determines a $g$-dimensional Schr\"odinger equation with quadratic potential in $z$ for a nonholonomic frame of vector fields in $\mathbb{C}^{2g}$ with coordinates~$\lambda$.
Based on the classical parameterization of the group ${\rm PSp}(2g, \mathbb{C})$,
an algebra of operators on the space of solutions of this system of equations as well as a ``seed'' primitive solution are constructed in \cite{Nonhol}.
As~a result, the so-called primitive $\mathbb{Z}^{2g}$--invariant solution is obtained by averaging the primitive solution over the lattice $\mathbb{Z}^{2g}$ contained in the operator algebra.
It is proven that the sigma function $\sigma(z, \lambda)$ of the hyperelliptic curve can~be~identified with such a solution.

The present work is devoted to the study of properties of the discussed systems of equations and development of their applications.
We consider systems of $2g$ heat equations that determine sigma functions $\sigma(z,\lambda)$ of the elliptic curve for $g=1$ and of hyperelliptic curves for $g = 2$ and $3$, where $z = (z_1, z_3, \ldots, z_{2g-1})$, and $\lambda = (\lambda_4, \lambda_6, \ldots, \lambda_{4g+2})$ are parameters of the universal curve.
It is shown that in an infinite-dimensional Lie~algebra of linear operators on the ring of smooth functions $\varphi(z,\lambda)$, the operators of this system form a Lie subalgebra $\mathcal{L}_Q$ with $2g$ generators over the ring $\mathbb{Q}[\lambda]$,  considered as the set of~operators of multiplication by~polynomials
$p(\lambda) \in \mathbb{Q}[\lambda]$. The Lie~algebra $\mathcal{L}_Q$ over $\mathbb{C}$ as a polynomial algebra over $\mathbb{Q}[\lambda]$ turned out to be isomorphic to the polynomial Lie algebra over~$\mathbb{Q}[\lambda]$ of vector fields tangent to the discriminant of a hyperelliptic curve in~$\mathbb{C}^{2g}$.
As~a~corollary, it was found that the system defined by the three operators $Q_0$, $Q_2$ and $Q_4$ is already sufficient to determine the general solution of the original system of $2g$ equations.

A transformation is introduced that maps the system of heat equations in $\varphi(z,\lambda)$ into a system of nonlinear equations in $\nabla \ln \varphi(z, \lambda)$, where~$\nabla$ is the gradient of the function in~$z$. This transformation is a multidimensional analogy of the Cole--Hopf transformation, which translates the one-dimensional heat equation into the Burgers equation.

Let $\varphi(z, \lambda)$ be some smooth solution of the system of heat equations. We denote by $\mathcal{R}_{\varphi}$ the graded commutative ring that is generated over $\mathbb{Q}[\lambda]$ by the logarithmic derivatives of $\varphi(z, \lambda)$ of order of at least $2$. We obtain an explicit description of the Lie algebra of~derivations of the ring~$\mathcal{R}_{\varphi}$ and point out the close interrelation of this Lie algebra with our system of nonlinear equations.
The need to obtain effective descriptions of such Lie~algebras of derivations is stimulated by actual problems of describing the dependence on the initial data of solutions to important problems of mathematical physics.
In particular, in the case $\varphi(z, \lambda) = \sigma(z, \lambda)$ 
we obtain the well-known solution to the problem of~constructing the Lie algebra of derivations of hyperellitic functions of genus $g = 1,2,3$.

\section{Nonholonomic frame} \label{S1}

Let $g \in \mathbb{N}$. Following \cite[Section 4]{A}, we consider the space $\mathbb{C}^{2g+1}$ with coordinates  $(\xi_1, \ldots, \xi_{2g+1})$ and we introduce the hyperplane $\Gamma$  given by the equation $\sum_{k=1}^{2g+1} \xi_k = 0$. The permutation group $S_{2g+1}$ of coordinates in $\mathbb{C}^{2g+1}$ corresponds to the action of the group $A_{2g}$ on $\Gamma$. We associate the vector $\xi \in \Gamma$ with the polynomial
\begin{equation} \label{pol1}
\prod_k (x - \xi_k) = x^{2g+1} + \lambda_4 x^{2 g - 1}  + \lambda_6 x^{2 g - 2} + \ldots + \lambda_{4 g} x + \lambda_{4 g + 2},
\end{equation}
where  $\lambda = (\lambda_4, \lambda_6, \ldots, \lambda_{4 g}, \lambda_{4 g + 2}) \in \mathbb{C}^{2g}$.
We identify the orbit space $\Gamma/A_{2g}$ with the space $\mathbb{C}^{2g}$ with coordinates~$\lambda$.
We denote the variety of regular orbits in $\mathbb{C}^{2g}$ by $\mathcal{B}$.
Thus, $\mathcal{B} \subset \mathbb{C}^{2g}$ is the subspace of parameters $\lambda$ such that the polynomial \eqref{pol1} has no multiple roots, and $\mathcal{B} = \mathbb{C}^{2g} \backslash \Sigma$, where $\Sigma$ is the discriminant hypersurface.

The gradient of any $A_{2g}$-invariant polynomial determines a vector field in~$\mathbb{C}^{2g}$ that is tangent to the discriminant $\Sigma$ of the genus $g$ hyperelliptic curve. Choosing a~multiplicative basis in the ring of $A_{2g}$-invariant polynomials, we can construct the corresponding $2g$ polynomial vector fields, which are linearly independent at each point in $\mathcal{B}$. These fields do not commute and determine a \emph{nonholonomic frame} in~$\mathcal{B}$.

In \cite[Section 4]{A} an approach to constructing an infinite-dimensional Lie algebra of such fields based on the convolution of invariants operation is described.
In the present work, we consider the fields
\[
 L_{0}, \quad L_{2}, \quad L_{4}, \quad  \ldots, \quad L_{4 g - 2},
\]
that correspond to the multiplicative basis in the ring of $A_{2g}$-invariants, composed of elementary symmetric functions. The structure polynomials of the convolution of invariants operation in this basis were obtained by D.\,B.\,Fuchs, see~\cite[Section~4]{A}. Note that the \emph{nonholonomic frame} in $\mathcal{B}$ corresponding to the multiplicative basis in the ring of $A_{2g}$-invariants composed of Newton polynomials is used in the works of V.\,M.\,Buchstaber and~A.\,V.\,Mikhailov, see
\cite{BMinf}. 

We express explicitly the vector fields $\{L_{2k}\}$ in the coordinates $(\lambda)$. For convenience, we assume that $\lambda_s = 0$ for all $s \notin \{ 4,6, \ldots, 4 g, 4 g + 2\}$.
For $k, m \in \{ 1, 2, \ldots, 2 g\}$, $k\leqslant m$,~set
\[
 T_{2k, 2m} = 2 (k+m) \lambda_{2 (k+m)} + \sum_{s=2}^{k-1} 2 (k + m - 2 s) \lambda_{2s} \lambda_{2 (k+m-s)}
 - {2 k (2 g - m + 1) \over 2 g + 1} \lambda_{2k} \lambda_{2m},
\] 
and for $k > m $ set $T_{2k, 2m} = T_{2m, 2k}$.

\begin{lem}\label{lem1}
For $k = 0, 1, 2, \ldots, {2 g - 1}$ the formula holds
\begin{equation} \label{Lk}
 L_{2k} = \sum_{s = 2}^{2 g + 1} T_{2k + 2, 2 s - 2} {\partial \over \partial \lambda_{2s}}.
\end{equation} 
\end{lem}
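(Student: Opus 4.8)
The plan is to reduce the lemma to the extraction of a single coefficient from one explicit bivariate polynomial. First I would record the dictionary between the coordinates $\lambda$ and the roots $\xi$ furnished by \eqref{pol1}. Writing $f(x)=\prod_k(x-\xi_k)$ and letting $e_s=e_s(\xi)$ be the elementary symmetric functions, one has on $\Gamma$ the relations $e_1=0$ and $\lambda_{2s}=(-1)^s e_s$ for $s=2,\dots,2g+1$, and $\mathbb{Q}[\lambda]=\mathbb{Q}[e_2,\dots,e_{2g+1}]$ is the ring of $A_{2g}$-invariants on $\Gamma$. By its construction from the convolution of invariants in the elementary-symmetric basis, $L_{2k}$ is the derivation of this ring attached to the invariant $\lambda_{2k+4}$ and made tangent to $\Gamma$; in terms of the convolution $\langle a,b\rangle=\sum_j(\partial a/\partial\xi_j)(\partial b/\partial\xi_j)$ and its projection $\langle a,b\rangle_\Gamma=\langle a,b\rangle-\tfrac{1}{2g+1}\big(\sum_j\partial a/\partial\xi_j\big)\big(\sum_j\partial b/\partial\xi_j\big)$ onto $\Gamma$, the normalisation that makes $L_0$ the Euler operator gives
\[
L_{2k}(\lambda_{2s})=-2\,\langle\lambda_{2k+4},\lambda_{2s}\rangle_\Gamma .
\]
Since $L_{2k}$ is a derivation of $\mathbb{Q}[\lambda]$, it is determined by these values, so \eqref{Lk} is equivalent to the identities $-2\,\langle\lambda_{2k+4},\lambda_{2s}\rangle_\Gamma=T_{2k+2,\,2s-2}$ for all $s$.

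Next I would compute these convolutions by a generating function. From $\partial f(x)/\partial\xi_j=-f(x)/(x-\xi_j)$ and a partial-fraction summation one obtains the closed forms
\[
\sum_j\frac{\partial f(x)}{\partial\xi_j}\frac{\partial f(y)}{\partial\xi_j}=\frac{f(x)f'(y)-f'(x)f(y)}{x-y},\qquad \sum_j\frac{\partial f(x)}{\partial\xi_j}=-f'(x),
\]
so that all projected convolutions are packaged by the single polynomial
\[
G(x,y)=\frac{f(x)f'(y)-f'(x)f(y)}{x-y}-\frac{1}{2g+1}\,f'(x)f'(y),
\]
whose coefficients are $\langle e_a,e_b\rangle_\Gamma=(-1)^{a+b}[x^{\,n-a}y^{\,n-b}]\,G$, with $n=2g+1$. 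The essential structural point is that $f$ and $f'$ are affine-linear in the $e$'s, whence $G$ has degree at most two in the $e$'s; this is precisely what forces $T_{2k+2,2s-2}$ to contain no monomials of $\lambda$-degree $\ge 3$, and it is the reason the formula closes up.

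Translating through $\lambda_{2s}=(-1)^s e_s$, the accumulated signs cancel and the target identity becomes simply
\[
T_{2k+2,\,2s-2}=-2\,[x^{\,2g-1-k}\,y^{\,2g+1-s}]\,G(x,y).
\]
I would then read off the two homogeneous parts of this coefficient. The part of $G$ linear in the $e$'s comes only from the fraction and produces the single linear term $2(k+s)\lambda_{2(k+s)}$. The part of $G$ quadratic in the $e$'s, assembled from the quadratic part of the fraction together with the projection term $-\tfrac{1}{2g+1}f'(x)f'(y)$ (evaluated by means of $\sum_j\partial_{\xi_j}e_a=(2g+2-a)e_{a-1}$), reproduces the interior sum $\sum_{t}2(k+s-2t)\lambda_{2t}\lambda_{2(k+s-t)}$ together with the correction $-\tfrac{2(k+1)(2g-s+2)}{2g+1}\lambda_{2k+2}\lambda_{2s-2}$; finally the constraint $k+1\le s-1$ and the convention $T_{2k,2m}=T_{2m,2k}$ organise these monomials into the summation range $2\le t\le k$ of the stated formula. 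The main obstacle will be exactly this quadratic bookkeeping: the projection onto $\Gamma$ is what generates the $2g+1$ in the denominator, but the coefficient of the monomial $\lambda_{2k+2}\lambda_{2s-2}$ receives contributions from both pieces of $G$, so one must combine them with the correct signs (tracked through $\lambda_{2s}=(-1)^s e_s$ and the overall factor $-2$) to land exactly on the coefficient $\tfrac{2k(2g-m+1)}{2g+1}$ recorded in $T_{2k,2m}$.
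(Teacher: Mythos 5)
Your proposal is essentially correct, and it is worth noting that it supplies an actual derivation where the paper supplies only a citation: the text takes the matrix $T$ from \cite[\S 4]{B3} and refers to \cite[Lemma 3.1]{4A} for a detailed proof of the lemma. Your reduction to the single coefficient identity $T_{2k+2,2s-2}=-2\,[x^{2g-1-k}y^{2g+1-s}]\,G(x,y)$, via the closed forms $\sum_j \partial_{\xi_j}f(x)\,\partial_{\xi_j}f(y)=\bigl(f(x)f'(y)-f'(x)f(y)\bigr)/(x-y)$ and $\sum_j\partial_{\xi_j}f(x)=-f'(x)$, is correct (it reproduces the full matrix \eqref{T10} for $g=1$, including the coefficient $-\tfrac43\lambda_4^2$ coming from the projection term), and the remark that $G$ is at most quadratic in the $e_a$ is exactly the right structural explanation for the shape of $T$. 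This is, in spirit, the same computation as in the cited reference, but packaged more cleanly through the generating polynomial $G$. Two points need tightening before this is a complete proof. First, the normalisation $L_{2k}(\lambda_{2s})=-2\langle\lambda_{2k+4},\lambda_{2s}\rangle_\Gamma$: requiring $L_0$ to be the Euler field only fixes the constant for $k=0$, so you must make explicit (or cite from \cite[\S 4]{A}) that the convolution construction attaches one uniform constant to the whole family $\{L_{2k}\}$; otherwise for $k>0$ your identity is a definition of $L_{2k}$ rather than a deduction about it. Second, the claim that the linear-in-$e$ part of $G$ ``comes only from the fraction'' is literally false: the projection term $-\tfrac{1}{2g+1}f'(x)f'(y)$ does contribute linear-in-$e$ monomials, but all of them carry a factor $x^{2g}$ or $y^{2g}$, which lies outside the window of exponents $x^{2g-1-k}y^{2g+1-s}$ with $0\le k\le 2g-1$ and $2\le s\le 2g+1$; your conclusion therefore survives, but the writeup should say why. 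With those two clarifications the remaining quadratic bookkeeping (using $\sum_j\partial_{\xi_j}e_a=(2g+2-a)e_{a-1}$) is routine, and the argument closes.
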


The expressions for the matrix $T = (T_{2k, 2m})$ in \eqref{Lk} are taken from \cite[\S 4]{B3}. A detailed proof of the Lemma can be found in \cite[Lemma 3.1]{4A}.

The vector field $L_0$ coincides with the Euler vector field. For all $k$ we have
\begin{equation} \label{L0}
 [L_0, L_{2k}] = 2 k L_{2k}.
\end{equation}

\subsection{Vector fields tangent to the discriminant  of genus $g=1$ hyperelliptic~curve}

In this case we get vector fields
\begin{align} \label{T1}
L_{0} &= 4 \lambda_4 {\partial \over \partial \lambda_4} + 6 \lambda_6 {\partial \over \partial \lambda_6}, &
L_{2} &= 6 \lambda_6 {\partial \over \partial \lambda_4} - {4 \over 3} \lambda_4^2 {\partial \over \partial \lambda_6},
\end{align}
i.e. the matrix $T = (T_{2k, 2m})$ has the form
\begin{equation}
 T =
 \begin{pmatrix} \label{T10}
 4 \lambda_4 & 6 \lambda_6\\
 6 \lambda_6 & - {4 \over 3} \lambda_4^2
 \end{pmatrix}.
\end{equation}
The commutation relation holds
$[L_0, L_2] = 2 L_2$.

\subsection{Vector fields tangent to the discriminant  of genus $g=2$ hyperelliptic~curve}

In this case the matrix $T = (T_{2k, 2m})$ has the form
\begin{equation} \label{T2}
 T =  \begin{pmatrix}
 4 \lambda_4 & 6 \lambda_6 & 8 \lambda_8 & 10 \lambda_{10}\\
 6 \lambda_6 & 8 \lambda_8 & 10 \lambda_{10} & 0 \\
 8 \lambda_8 & 10 \lambda_{10} & 4 \lambda_4 \lambda_8 & 6 \lambda_4 \lambda_{10} \\
 10 \lambda_{10} & 0 & 6 \lambda_4 \lambda_{10} & 4 \lambda_6 \lambda_{10}
 \end{pmatrix} 
 - {1 \over 5} 
 \begin{pmatrix}
 0 & 0 & 0 & 0 \\
 0 & 12 \lambda_4^2 & 8 \lambda_4 \lambda_6 & 4 \lambda_4 \lambda_8 \\
 0 & 8 \lambda_4 \lambda_6 & 12 \lambda_6^2 & 6 \lambda_6 \lambda_8 \\
 0 & 4 \lambda_4 \lambda_8 & 6 \lambda_6 \lambda_8 & 8 \lambda_8^2 \\
 \end{pmatrix}.
\end{equation}

\begin{lem}[{\cite[Lemma 30]{B2}}]\label{lg2} The commutation relations~\eqref{L0} hold, as well as
\begin{equation} \label{M2}
 \begin{pmatrix}
 [L_2, L_4] \\
 [L_2, L_6] \\
 [L_4, L_6] \\
 \end{pmatrix}
=
\mathcal{M}
 \begin{pmatrix}
 L_0 \\
 L_2 \\
 L_4 \\
 L_6 \\
 \end{pmatrix}, \quad
 \mathcal{M} =  {2 \over 5} \begin{pmatrix}
4 \lambda_6 & - 4 \lambda_4 & 0 & 5 \\
2 \lambda_8 & 0 & - 2 \lambda_4 & 0 \\
- 5 \lambda_{10} & 3 \lambda_8 & - 3 \lambda_6 & 5 \lambda_4 
 \end{pmatrix}.
\end{equation}
\end{lem}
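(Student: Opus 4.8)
The plan is to prove the two assertions separately, using the weight grading to organize what is otherwise a mechanical computation of Lie brackets of the explicit first-order operators \eqref{Lk}.

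For the relations \eqref{L0} I would argue by homogeneity. Writing $L_{2k}=\sum_{s} T_{2k+2,\,2s-2}\,\partial/\partial\lambda_{2s}$ as in \eqref{Lk}, one checks from the definition of $T$ that each entry $T_{2a,2b}$ is weight-homogeneous of weight $2a+2b$; hence the coefficient $T_{2k+2,\,2s-2}$ has weight $2k+2s$, the term $T_{2k+2,\,2s-2}\,\partial/\partial\lambda_{2s}$ has weight $2k$, and so $L_{2k}$ is a vector field of pure weight $2k$. Since $L_0$ is the Euler field, its adjoint action grades homogeneous fields by weight, i.e. $[L_0,X]=w\,X$ for any weight-$w$ field $X$, which yields \eqref{L0} at once.

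For \eqref{M2} the first step is closure. The fields $L_{2k}$ are logarithmic (tangent) along the discriminant $\Sigma$, and the bracket of two logarithmic fields is again logarithmic; since $L_0,L_2,L_4,L_6$ are linearly independent at each point of $\mathcal{B}$ and form a basis of the free $\mathbb{Q}[\lambda]$-module of such fields, each of $[L_2,L_4]$, $[L_2,L_6]$, $[L_4,L_6]$ is automatically a $\mathbb{Q}[\lambda]$-combination of $L_0,L_2,L_4,L_6$ (alternatively this closure can be read off once the three brackets are computed). The grading then severely restricts the shape of $\mathcal{M}$: the bracket $[L_{2i},L_{2j}]$ has weight $2i+2j$, so in an expansion $\sum_k c_k L_{2k}$ each coefficient $c_k$ must be an $A_{2g}$-invariant polynomial of weight $2(i+j)-2k$. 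In genus $2$ this forces, row by row, exactly the monomials appearing in \eqref{M2} and kills all others (for instance, in the first row $c_4$ would have weight $2$ and hence vanishes).

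It remains to fix the finitely many rational constants, and here the cleanest probe is the $\partial/\partial\lambda_4$-component: by \eqref{Lk} the coefficient of $\partial/\partial\lambda_4$ in $L_{2k}$ equals $T_{2k+2,2}=(2k+2)\lambda_{2k+2}$, a single monomial, so the $\partial/\partial\lambda_4$-component of $\sum_k c_k L_{2k}$ is $\sum_k (2k+2)\lambda_{2k+2}\,c_k$, from which several constants are read off immediately after computing the corresponding component of the bracket. Comparing in addition the $\partial/\partial\lambda_6$-components, which bring in the independent monomials $\lambda_6^2,\ \lambda_4^3,\ \lambda_4\lambda_8$, separates the remaining coefficients and over-determines the answer, giving a built-in consistency check. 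The main obstacle is purely the bookkeeping of the three bracket computations and the accurate extraction of the rational coefficients; the one conceptual point requiring care is the closure statement, which I would secure via the tangency-to-$\Sigma$ argument above rather than by inspection of the final formulas.
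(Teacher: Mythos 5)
Your overall strategy is sound and, at bottom, lands in the same place as the source: the paper gives no argument here beyond citing \cite[Lemma~30]{B2}, and the proof there (as for the analogous Lemma~\ref{lemQ} in this paper) is a direct calculation of the three commutators. What you add is useful organization: the homogeneity argument for \eqref{L0} is clean and correct (each $T_{2a,2b}$ is indeed homogeneous of weight $2a+2b$, so $L_{2k}$ has pure weight $2k$ and the Euler field grades it), and using the weight grading plus tangency to $\Sigma$ to constrain $\mathcal{M}$ before computing is a legitimate way to reduce the bookkeeping. One caveat on the closure step: linear independence on $\mathcal{B}$ alone only gives an expansion with coefficients that are a priori rational functions with poles on $\Sigma$; to get polynomial coefficients you genuinely need that $L_0,\dots,L_6$ freely generate the full $\mathbb{Q}[\lambda]$-module of polynomial fields tangent to $\Sigma$ (Saito's freeness of the discriminant, via $\det T \doteq$ discriminant). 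You invoke this correctly but it is an external input, not something the paper's stated facts supply; your fallback of reading closure off the computed brackets is the safer route.

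Two concrete slips, neither fatal but both worth fixing. First, the grading does \emph{not} force ``exactly the monomials appearing in \eqref{M2}'': in the second row $c_0$ has weight $8$ and could a priori contain $\lambda_4^2$ as well as $\lambda_8$, and $c_2$ has weight $6$ and could contain $\lambda_6$ (its actual value $0$ is a computational fact, not a grading consequence); similarly the third row admits $\lambda_4\lambda_6$ and $\lambda_4^2$ terms that happen to be absent. So the undetermined-coefficients step has a few more unknowns than you suggest, though your component-matching still determines them. Second, your probe formula is mis-indexed: from \eqref{Lk} the $\partial/\partial\lambda_4$-coefficient of $L_{2k}$ is $T_{2k+2,2}=2(k+2)\lambda_{2k+4}$ (i.e.\ $4\lambda_4,\,6\lambda_6,\,8\lambda_8,\,10\lambda_{10}$ for $k=0,1,2,3$, the first column of \eqref{T2}), not $(2k+2)\lambda_{2k+2}$. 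The probe is still a single monomial per field, so the method survives, but with the wrong monomials you would extract wrong constants; note also that this single component does not separate $c_0$ and $c_2$ in the first row (both contribute $\lambda_4\lambda_6$), so the supplementary $\partial/\partial\lambda_6$-comparison you propose is actually necessary, not just a consistency check.
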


\subsection{Vector fields tangent to the discriminant  of genus $g=3$ hyperelliptic~curve}

In this case the matrix $T = (T_{2k, 2m})$ has the form
\begin{multline} \label{T3}
 T = \begin{pmatrix}
     4 \lambda_4 & 6 \lambda_6 & 8 \lambda_8 & 10 \lambda_{10} & 12 \lambda_{12} & 14 \lambda_{14} \\
     6 \lambda_6 & 8 \lambda_8 & 10 \lambda_{10} & 12 \lambda_{12} & 14 \lambda_{14} & 0 \\
     8 \lambda_8 & 10 \lambda_{10} & 12 \lambda_{12} + 4 \lambda_4 \lambda_8 & 14 \lambda_{14} + 6 \lambda_4 \lambda_{10} & 8 \lambda_4 \lambda_{12} & 10 \lambda_4 \lambda_{14} \\
     10 \lambda_{10} & 12 \lambda_{12} & 14 \lambda_{14} + 6 \lambda_4 \lambda_{10} & 4 \lambda_6 \lambda_{10} + 8 \lambda_4 \lambda_{12}
     & 6 \lambda_6 \lambda_{12} + 10 \lambda_4 \lambda_{14} & 8 \lambda_6 \lambda_{14} \\
     12 \lambda_{12} & 14 \lambda_{14} & 8 \lambda_4 \lambda_{12} & 6 \lambda_6 \lambda_{12} + 10 \lambda_4 \lambda_{14} & 4 \lambda_8 \lambda_{12} + 8 \lambda_6 \lambda_{14}
     & 6 \lambda_8 \lambda_{14} \\
     14 \lambda_{14} & 0 & 10 \lambda_4 \lambda_{14} & 8 \lambda_6 \lambda_{14} & 6 \lambda_8 \lambda_{14} & 4 \lambda_{10} \lambda_{14} \\
     \end{pmatrix} - \\
     - {1 \over 7}
     \begin{pmatrix}
     0 & 0 & 0 & 0 & 0 & 0 \\
     0 & 20 \lambda_4^2 & 16 \lambda_4 \lambda_6 & 12 \lambda_4 \lambda_8 &  8 \lambda_4 \lambda_{10} & 4 \lambda_4 \lambda_{12} \\
     0 & 16 \lambda_4 \lambda_6 & 24 \lambda_6^2 & 18 \lambda_6 \lambda_8 & 12 \lambda_6 \lambda_{10} & 6 \lambda_6 \lambda_{12} \\
     0 & 12 \lambda_4 \lambda_8 & 18 \lambda_6 \lambda_8 & 24 \lambda_8^2 & 16 \lambda_8 \lambda_{10} & 8 \lambda_8 \lambda_{12} \\
     0 &  8 \lambda_4 \lambda_{10} & 12 \lambda_6 \lambda_{10} & 16 \lambda_8 \lambda_{10} & 20 \lambda_{10}^2 & 10 \lambda_{10} \lambda_{12} \\
     0 &  4 \lambda_4 \lambda_{12} &  6 \lambda_6 \lambda_{12} &  8 \lambda_8 \lambda_{12} & 10 \lambda_{10} \lambda_{12} & 12 \lambda_{12}^2 \\
     \end{pmatrix}.  
\end{multline}

\begin{lem}[{\cite[Lemma 4.3]{B3}}]\label{lg3}
The commutation relations~\eqref{L0} hold, as well as
\begin{equation} \label{M3}
 \begin{pmatrix}
 [L_2, L_4] \\
 [L_2, L_6] \\
 [L_2, L_8] \\
 [L_2, L_{10}] \\
 [L_4, L_6] \\
 [L_4, L_8] \\
 [L_4, L_{10}] \\
 [L_6, L_8] \\
 [L_6, L_{10}] \\
 [L_8, L_{10}] 
 \end{pmatrix}
=
\mathcal{M}
 \begin{pmatrix}
 L_0 \\
 L_2 \\
 L_4 \\
 L_6 \\
 L_8 \\
 L_{10}
 \end{pmatrix},
 \mathcal{M} =  {2 \over 7} \begin{pmatrix}
8 \lambda_6 & - 8 \lambda_4 & 0 & 7 & 0 & 0 \\
6 \lambda_8 & 0 & - 6 \lambda_4 & 0 & 14 & 0 \\
4 \lambda_{10} & 0 & 0 & - 4 \lambda_4 & 0 & 21 \\
2 \lambda_{12} & 0 & 0 & 0 & - 2 \lambda_4 & 0 \\
- 7 \lambda_{10} & 9 \lambda_8 & - 9 \lambda_6 & 7 \lambda_4 & 0 & 7 \\
- 14 \lambda_{12} & 6 \lambda_{10} & 0 & - 6 \lambda_6 & 14 \lambda_4 & 0 \\
- 21 \lambda_{14} & 3 \lambda_{12} & 0 & 0 & - 3 \lambda_6 & 21 \lambda_4 \\
- 7 \lambda_{14} & - 7 \lambda_{12} & 8 \lambda_{10} & - 8 \lambda_8 & 7 \lambda_6 & 7 \lambda_4 \\
0 & - 14 \lambda_{14} & 4 \lambda_{12} & 0 & - 4 \lambda_8 & 14 \lambda_6 \\
0 & 0 &- 7 \lambda_{14} & 5 \lambda_{12} & - 5 \lambda_{10} & 7 \lambda_8 \\
 \end{pmatrix}.
\end{equation}
\end{lem}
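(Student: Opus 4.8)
\emph{Proof plan.} The relations \eqref{L0}, $[L_0,L_{2k}]=2kL_{2k}$, are the grading relations, since $L_0$ is the Euler field and $L_{2k}$ is homogeneous of weight $2k$ as an operator; this part is already recorded above. It therefore remains to establish the ten off-diagonal brackets $[L_{2i},L_{2j}]$, $1\le i<j\le 5$, collected in \eqref{M3}. My first step is to argue that each such bracket is automatically a $\mathbb{Q}[\lambda]$-linear combination of $L_0,L_2,L_4,L_6,L_8,L_{10}$, so that the matrix $\mathcal{M}$ exists before I compute it. The fields $L_{2k}$ are tangent to the discriminant $\Sigma\subset\mathbb{C}^6$ of the genus $3$ curve; writing $\Sigma=\{h=0\}$ we have $L_{2k}h=c_k h$ for suitable $c_k$, and a one-line computation gives $[L_{2i},L_{2j}]h=(L_{2i}c_j-L_{2j}c_i)h$, so the bracket is again tangent. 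Since $\Sigma$ is a free divisor whose module of logarithmic vector fields is free of rank $2g=6$ with basis $L_0,\ldots,L_{10}$, the bracket must lie in this module, which is exactly the assertion that $\mathcal{M}$ exists with entries in $\mathbb{Q}[\lambda]$.

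Next I would fix the shape of $\mathcal{M}$ by weights. Since $L_{2m}$ has weight $2m$ and $\wt\lambda_{2s}=2s$, the coefficient of $L_{2m}$ in $[L_{2i},L_{2j}]$ must be a homogeneous polynomial of weight $2(i+j-m)$; in particular it vanishes when $i+j<m$ and reduces to a rational constant when $i+j=m$. Listing the admissible monomials entry by entry reproduces the sparse pattern of $\mathcal{M}$ in \eqref{M3} and leaves only finitely many rational numbers to be pinned down.

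To determine those numbers I would use that a vector field on $\mathbb{C}^6$ is determined by its values on the coordinates $\lambda_4,\ldots,\lambda_{14}$. For each bracket and each $s$, the left-hand side is $[L_{2i},L_{2j}]\lambda_{2s}=L_{2i}T_{2j+2,2s-2}-L_{2j}T_{2i+2,2s-2}$, read off from \eqref{T3}, while the right-hand side is $\sum_m \mathcal{M}_{(ij),m}\,T_{2m+2,2s-2}$; matching these polynomial identities fixes the constants. To cut down the volume I would compute directly only the simplest brackets, namely $[L_2,L_{2k}]$ for $k=2,3,4,5$ and $[L_4,L_6]$, and obtain the remaining five brackets from the Jacobi identity, processing them in increasing order of the total weight $2(i+j)$, as in $[L_4,[L_2,L_6]]=[[L_4,L_2],L_6]+[L_2,[L_4,L_6]]$. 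Here I must remember that the coefficients are not central, so each manipulation carries the Lie--Rinehart correction $[fX,gY]=fg[X,Y]+f(Xg)Y-g(Yf)X$.

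The main obstacle is purely the bookkeeping: for genus $3$ there are ten brackets with six components each, and the Jacobi bootstrapping produces a proliferation of derivative terms from the Leibniz correction that must be recombined and reduced back into the basis $\{L_{2m}\}$. This is the natural point at which sign or coefficient errors creep in, and I would guard against them with an independent computer-algebra evaluation on the coordinate functions. The scheme is identical to the one behind the genus $2$ statement of Lemma~\ref{lg2}; the genus $3$ case is the same argument carried out in six variables.
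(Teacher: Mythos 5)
The core of your plan --- evaluating both sides of \eqref{M3} on the coordinate functions, i.e.\ checking the polynomial identities $L_{2i}T_{2j+2,2s-2}-L_{2j}T_{2i+2,2s-2}=\sum_m\mathcal{M}_{(ij),m}\,T_{2m+2,2s-2}$ with the entries of $T$ read off from \eqref{T3} --- is exactly the direct calculation on which the result rests: the paper offers nothing beyond the citation of \cite[Lemma~4.3]{B3}, and the analogous Lemma~\ref{lemQ} is proved ``by direct calculation.'' Carried out for all ten rows, that step alone proves the lemma; your preliminary existence and weight arguments are then superfluous, though correct (the assertion that $\Sigma$ is a free divisor with basis $\{L_{2m}\}$ is itself a theorem and should be attributed to Saito's criterion, using that the fields are tangent and that $\det T$ is, up to a constant, a reduced equation of the discriminant, cf.\ \cite[\S 4]{A}).

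The genuine gap is in your shortcut: the five remaining brackets do \emph{not} all follow from the Jacobi identity once $[L_2,L_4]$, $[L_2,L_6]$, $[L_2,L_8]$, $[L_2,L_{10}]$, $[L_4,L_6]$ are known. The obstruction sits in weight $14$, where two brackets, $[L_4,L_{10}]$ and $[L_6,L_8]$, are undetermined. The only Jacobi identity of that weight with three distinct arguments is the one for the triple $(L_2,L_4,L_8)$; since $[L_2,L_8]$ contains $6L_{10}$ and $[L_2,L_4]$ contains $2L_6$, it reduces to the single relation $6[L_4,L_{10}]+2[L_6,L_8]=(\text{known})$, and every alternative route --- substituting $L_{10}=\tfrac16[L_2,L_8]+\dots$ into $[L_4,L_{10}]$, or $L_6=\tfrac12[L_2,L_4]+\dots$ into $[L_6,L_8]$, or $L_8=\tfrac14[L_2,L_6]+\dots$, which even collapses to the tautology $[L_6,L_8]=[L_6,L_8]+(\dots)$ --- reproduces this same relation. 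One equation cannot determine two unknown $\mathbb{Q}[\lambda]$-valued combinations, and neither homogeneity nor tangency to $\Sigma$ removes the remaining freedom. You must therefore also compute one of $[L_4,L_{10}]$, $[L_6,L_8]$ directly (six direct brackets, not five); after that, weights $16$ and $18$ are indeed settled by the Jacobi triples $(L_2,L_4,L_{10})$ and $(L_2,L_6,L_{10})$, provided the Lie--Rinehart corrections are tracked as you indicate.
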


\vfill
\eject

\section{Heat conduction operators} \label{S2}

We consider the space $\mathbb{C}^{3g}$ with coordinates $(z, \lambda)$, where $z = (z_1, z_3, \ldots, z_{2g-1})$ and $\lambda = (\lambda_4, \lambda_6, \ldots, \lambda_{4g+2})$. We introduce the notation $\partial_k = {\partial \over \partial z_k}$ for odd $k$.

Consider the nonholonomic frame defined by the vector fields $\{L_{2k}\}$. The latter have been introduced in \S \ref{S1}.
Following \cite{Nonhol}, we call \emph{heat conduction operators in nonholonomic frame} $\{L_{2k}\}$ the second-order linear differential operators of the form
\[
Q_{2k} = L_{2k} - H_{2k}, 
\]
where
\begin{equation} \label{e2}
H_{2k} = {1 \over 2} \sum \left( \alpha_{a,b}^{(k)}(\lambda)\partial_a \partial_b + 2 \beta_{a,b}^{(k)}(\lambda)z_a\partial_b + \gamma_{a,b}^{(k)}(\lambda)z_a z_b\right) + \delta^{(k)}(\lambda),
\end{equation}
and the summation is over odd $a, b$ from $1$ to $2g-1$.

\begin{dfn}
The system of equations for $\varphi = \varphi(z, \lambda)$
\begin{equation} \label{e3}
Q_{2k} \varphi = 0
\end{equation}
is called the \emph{system of heat equations}.
\end{dfn}

Let $\Phi$ be the ring of smooth functions in $\mathbb{C}^{3g}$ with coordinates $(z, \lambda)$.
Further in this section we will introduce a specific system of operators $\{Q_{2k}\}$ for $g = 1,2,3$, acting on $\Phi$. We follow the works \cite{Nonhol, BL0, BL, 4A}. In the following sections, we will reveal the fundamental properties of these systems.

Let us note that for the system of operators $\{H_{2k}\}$ in the expression \eqref{e2} the relations hold:
\begin{align*}
&\alpha_{a,b}^{(k)}(\lambda) = \left[ \begin{matrix}
                                      1, \quad a+b = 2 k, \quad a, b \in 2 \mathbb{N} + 1,\\
                                      0, \quad \text{else,} \hfill
                                     \end{matrix}
\right.\\
&\beta_{a,b}^{(k)}(\lambda) \text{ is a linear function in } \lambda,\\
&\gamma_{a,b}^{(k)}(\lambda) \text{ is a quadratic function in } \lambda,\\
&\delta^{(k)}(\lambda) = c_k \lambda_{2k} \text{ for constant } c_k.\\
\end{align*}

\subsection{Heat conduction operators for $g=1$}

In this case, we define the heat operators $Q_0$ and $Q_2$ by the expressions $Q_{2k} = L_{2k} - H_{2k}$, where $L_{2k}$ are given by~\eqref{T1} and
\begin{align*}
H_0 &= z_1 \partial_1 - 1, &
H_2 &= {1 \over 2} \partial_1^2 - {1 \over 6} \lambda_4 z_1^2.
\end{align*}

\subsection{Heat conduction operators for $g=2$}

In this case, we define the heat operators $Q_0$, $Q_2$, $Q_4$, and $Q_6$ by the expressions $Q_{2k} = L_{2k} - H_{2k}$, where $L_{2k}$ are~given by \eqref{Lk} for~\eqref{T2}, and
\begin{align*}
H_0 &= z_1 \partial_1 + 3 z_3 \partial_3 - 3,\\
H_2 &= {1 \over 2} \partial_1^2 - {4 \over 5} \lambda_4 z_3 \partial_1 + z_1 \partial_3 - {3 \over 10} \lambda_4 z_1^2 + \left({3 \over 2} \lambda_8 - {2 \over 5} \lambda_4^2\right) z_3^2,\\
H_4 &= \partial_1 \partial_3 - {6 \over 5} \lambda_6 z_3 \partial_1 + \lambda_4 z_3 \partial_3 - {1 \over 5} \lambda_6 z_1^2 + \lambda_8 z_1 z_3 + \left(3 \lambda_{10} - {3 \over 5} \lambda_4 \lambda_6\right) z_3^2 - \lambda_4,\\
H_6 &= {1 \over 2} \partial_3^2 - {3 \over 5} \lambda_8 z_3 \partial_1 - {1 \over 10} \lambda_8 z_1^2 + 2 \lambda_{10} z_1 z_3 - {3 \over 10} \lambda_4 \lambda_8 z_3^2 - {1 \over 2} \lambda_6.
\end{align*}

\vfill
\eject

\subsection{Heat conduction operators for $g=3$}

In this case, we define the heat operators $Q_0$, $Q_2$, $Q_4$, $Q_6$, $Q_8$, and $Q_{10}$ by the expressions $Q_{2k} = L_{2k} - H_{2k}$, where $L_{2k}$ are given by \eqref{Lk} for \eqref{T3}, and
\begin{align*}
H_0 &=
z_1 \partial_1 + 3 z_3 \partial_3 + 5 z_5 \partial_5 - 6,
\\
H_2 &= {1 \over 2} \partial_1^2 - {8 \over 7} \lambda_4 z_3 \partial_1 + \left(z_1 - {4 \over 7} \lambda_4 z_5\right) \partial_3 + 3 z_3 \partial_5 + \\
& - {5 \over 14} \lambda_4 z_1^2 + \left({3 \over 2} \lambda_8 - {4 \over 7} \lambda_4^2\right) z_3^2 +
 \left({5 \over 2} \lambda_{12} - {2 \over 7} \lambda_4 \lambda_8 \right) z_5^2,\\
H_4 &= \partial_1 \partial_3 - {12 \over 7} \lambda_6 z_3 \partial_1 + \left(\lambda_4 z_3 - {6 \over 7} \lambda_6 z_5\right) \partial_3 + \left(z_1 + 3 \lambda_4 z_5 \right) \partial_5 - {2 \over 7} \lambda_6 z_1^2 + \\
& + \lambda_8 z_1 z_3 
 + \left(3 \lambda_{10} - {6 \over 7} \lambda_4 \lambda_6\right) z_3^2 + 3 \lambda_{12} z_3 z_5 + \left(5  \lambda_{14} - {3 \over 7} \lambda_6 \lambda_8 \right) z_5^2
  - 3 \lambda_4,\\
H_6 &= {1 \over 2} \partial_3^2 + \partial_1 \partial_5 - {9 \over 7} \lambda_8 z_3 \partial_1 - {8 \over 7} \lambda_8 z_5 \partial_3 + \left(\lambda_4 z_3 + 2 \lambda_6 z_5\right) \partial_5 - {3 \over 14} \lambda_8 z_1^2 + 2 \lambda_{10} z_1 z_3 + \\ &  + \left({9 \over 2} \lambda_{12} - {9 \over 14} \lambda_4 \lambda_8\right) z_3^2 + \lambda_{12} z_1 z_5 + 6 \lambda_{14} z_3 z_5 + \left({3 \over 2} \lambda_4 \lambda_{12} - {4 \over 7} \lambda_8^2\right) z_5^2 - 2 \lambda_6,\\
H_8 &= \partial_3 \partial_5 - \left({6 \over 7} \lambda_{10} z_3 - \lambda_{12} z_5\right) \partial_1 - {10 \over 7} \lambda_{10} z_5 \partial_3 + \lambda_8 z_5 \partial_5 - {1 \over 7} \lambda_{10} z_1^2 + 3 \lambda_{12} z_1 z_3 +\\+ &\left( 6 \lambda_{14} - {3 \over 7} \lambda_4 \lambda_{10}\right) z_3^2 + 2 \lambda_{14} z_1 z_5 + \lambda_4 \lambda_{12} z_3 z_5 + \left(3 \lambda_{4} \lambda_{14} + \lambda_6 \lambda_{12} -{5 \over 7} \lambda_8 \lambda_{10}\right) z_5^2 - \lambda_8,\\
H_{10} &= {1 \over 2} \partial_5^2 - \left( {3 \over 7} \lambda_{12} z_3 - 2 \lambda_{14} z_5 \right) \partial_1 - {5 \over 7} \lambda_{12} z_5 \partial_3 - \\ & -{1 \over 14} \lambda_{12} z_1^2 + 4 \lambda_{14} z_1 z_3 - {3 \over 14} \lambda_4 \lambda_{12} z_3^2 + 2 \lambda_4 \lambda_{14} z_3 z_5 + \left(2 \lambda_6 \lambda_{14} - {5 \over 14} \lambda_8 \lambda_{12}\right) z_5^2  
 - {1 \over 2} \lambda_{10}.
\end{align*}

\subsection{The commutation relations} \text{ }

Note that to each $p(\lambda) \in \mathbb{Q}[\lambda]$ one can match the operator of multiplication by~$p(\lambda)$, acting on $\Phi$. This operator will also be denoted by $p(\lambda)$.

\begin{lem} \label{lemlQ}
The relations hold
\begin{equation} \label{comQ}
[Q_{2k}, \lambda_{2s}] = T_{2k+2, 2s-2},  
\end{equation}
where $T_{2k+2, 2s-2} \in \mathbb{Q}[\lambda]$ are given in Lemma \ref{lem1}.
\end{lem}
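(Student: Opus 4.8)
The plan is to evaluate $[Q_{2k},\lambda_{2s}]$ by writing $Q_{2k}=L_{2k}-H_{2k}$ and treating the two summands separately, exploiting that here $\lambda_{2s}$ denotes the operator of multiplication by the coordinate function $\lambda_{2s}$, which depends on $\lambda$ alone.

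First I would dispose of $H_{2k}$. By its definition \eqref{e2}, $H_{2k}$ is assembled entirely from the operators $\partial_a=\partial/\partial z_a$ (differentiation in the $z$-variables) and from multiplications by functions of $\lambda$ (the coefficients $\alpha,\beta,\gamma,\delta$) and by the $z$-variables. None of these constituents involves a $\lambda$-derivative. Multiplication by $\lambda_{2s}$ commutes with multiplication by any function of $z$ and $\lambda$, and it commutes with every $\partial_a$ because $\partial_a\lambda_{2s}=0$. Hence $[H_{2k},\lambda_{2s}]=0$.

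Next, for $L_{2k}$ I would use that it is a first-order differential operator in the $\lambda$-variables, namely $L_{2k}=\sum_{r=2}^{2g+1}T_{2k+2,2r-2}\,\partial/\partial\lambda_{2r}$ by \eqref{Lk}. Applying the Leibniz rule to $L_{2k}(\lambda_{2s}\varphi)$, the derivative either falls on $\lambda_{2s}$ or on $\varphi$; the latter contribution reproduces exactly $\lambda_{2s}L_{2k}\varphi$, so that $[L_{2k},\lambda_{2s}]\varphi=\bigl(L_{2k}\lambda_{2s}\bigr)\varphi$. Since $\partial\lambda_{2s}/\partial\lambda_{2r}=\delta_{rs}$, only the $r=s$ term survives and $L_{2k}\lambda_{2s}=T_{2k+2,2s-2}$. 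Combining the two computations yields $[Q_{2k},\lambda_{2s}]=[L_{2k},\lambda_{2s}]-[H_{2k},\lambda_{2s}]=T_{2k+2,2s-2}$, as claimed.

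There is essentially no analytic obstacle here: the statement is a direct consequence of the Leibniz rule together with the absence of $\lambda$-derivatives in $H_{2k}$. The only point demanding care is the bookkeeping of index ranges—confirming that the index $s$ appearing in $\lambda_{2s}$ lies within the summation range $2\leqslant r\leqslant 2g+1$ of \eqref{Lk}, so that the Kronecker delta indeed selects the coefficient $T_{2k+2,2s-2}$, and checking that the convention $\lambda_s=0$ for $s\notin\{4,6,\ldots,4g+2\}$ is consistent with reading off that coefficient.
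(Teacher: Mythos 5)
Your proposal is correct and follows essentially the same route as the paper's proof: observe that $H_{2k}$ contains no $\lambda$-derivatives so $[H_{2k},\lambda_{2s}]=0$, and then read off $[L_{2k},\lambda_{2s}]=L_{2k}(\lambda_{2s})=T_{2k+2,2s-2}$ from the expression \eqref{Lk}. You merely spell out the Leibniz-rule and index bookkeeping that the paper leaves implicit.
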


\begin{proof}
In view of \eqref{e2}, we have $[H_{2k}, \lambda_{2s}] = 0$, hence $[Q_{2k}, \lambda_{2s}] = [L_{2k}, \lambda_{2s}] = T_{2k+2, 2s-2}$.
\end{proof}

\vfill
\eject

\begin{lem} \label{lemQ}
For the operators $\{Q_{2k}\}$ the commutation relations hold
\[
 [Q_0, Q_{2k}] = 2 k Q_{2k};
\]
for $g = 2$ the commutation relations hold (cf. \eqref{M2})
\begin{equation} \label{Q2}
 \begin{pmatrix}
 [Q_2, Q_4] \\
 [Q_2, Q_6] \\
 [Q_4, Q_6] \\
 \end{pmatrix}
=
\mathcal{M}
 \begin{pmatrix}
 Q_0 \\
 Q_2 \\
 Q_4 \\
 Q_6 \\
 \end{pmatrix}, \quad 
 \mathcal{M} =  {2 \over 5} \begin{pmatrix}
4 \lambda_6 & - 4 \lambda_4 & 0 & 5 \\
2 \lambda_8 & 0 & - 2 \lambda_4 & 0 \\
- 5 \lambda_{10} & 3 \lambda_8 & - 3 \lambda_6 & 5 \lambda_4 
 \end{pmatrix};
\end{equation}
and for $g = 3$ the commutation relations hold (cf. \eqref{M3})
\begin{equation} \label{Q3}
 \begin{pmatrix}
 [Q_2, Q_4] \\
 [Q_2, Q_6] \\
 [Q_2, Q_8] \\
 [Q_2, Q_{10}] \\
 [Q_4, Q_6] \\
 [Q_4, Q_8] \\
 [Q_4, Q_{10}] \\
 [Q_6, Q_8] \\
 [Q_6, Q_{10}] \\
 [Q_8, Q_{10}] 
 \end{pmatrix}
=
\mathcal{M}
 \begin{pmatrix}
 Q_0 \\
 Q_2 \\
 Q_4 \\
 Q_6 \\
 Q_8 \\
 Q_{10}
 \end{pmatrix},
 \mathcal{M} =  {2 \over 7} \begin{pmatrix}
8 \lambda_6 & - 8 \lambda_4 & 0 & 7 & 0 & 0 \\
6 \lambda_8 & 0 & - 6 \lambda_4 & 0 & 14 & 0 \\
4 \lambda_{10} & 0 & 0 & - 4 \lambda_4 & 0 & 21 \\
2 \lambda_{12} & 0 & 0 & 0 & - 2 \lambda_4 & 0 \\
- 7 \lambda_{10} & 9 \lambda_8 & - 9 \lambda_6 & 7 \lambda_4 & 0 & 7 \\
- 14 \lambda_{12} & 6 \lambda_{10} & 0 & - 6 \lambda_6 & 14 \lambda_4 & 0 \\
- 21 \lambda_{14} & 3 \lambda_{12} & 0 & 0 & - 3 \lambda_6 & 21 \lambda_4 \\
- 7 \lambda_{14} & - 7 \lambda_{12} & 8 \lambda_{10} & - 8 \lambda_8 & 7 \lambda_6 & 7 \lambda_4 \\
0 & - 14 \lambda_{14} & 4 \lambda_{12} & 0 & - 4 \lambda_8 & 14 \lambda_6 \\
0 & 0 &- 7 \lambda_{14} & 5 \lambda_{12} & - 5 \lambda_{10} & 7 \lambda_8 \\
 \end{pmatrix}.
\end{equation}
\end{lem}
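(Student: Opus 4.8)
The plan is to work from the decomposition $Q_{2k} = L_{2k} - H_{2k}$, separating in every bracket the part that is a vector field in the variables $\lambda$ from the part that is a differential operator in the variables $z$. Since $L_{2k}$ differentiates only in $\lambda$, while $H_{2m}$ contains no $\partial/\partial\lambda_{2s}$ and has coefficients polynomial in $\lambda$, the bracket $[L_{2k}, H_{2m}]$ is again a pure $z$-operator, namely the one obtained by letting $L_{2k}$ act on the coefficients of $H_{2m}$; likewise $[H_{2k}, H_{2m}]$ is a $z$-operator with $\lambda$ entering only as a parameter. Writing $L_{2k}(H_{2m})$ for the action on coefficients, expansion gives
\[
[Q_{2k}, Q_{2m}] = [L_{2k}, L_{2m}] + \bigl(-L_{2k}(H_{2m}) + L_{2m}(H_{2k}) + [H_{2k}, H_{2m}]\bigr),
\]
where the first summand is the $\lambda$-vector-field part and the parenthesised expression is the $z$-operator part.

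I would dispatch the grading relation first and separately. The operator $Q_0 = L_0 - H_0$ is the Euler operator for the total weight $\wt\lambda_{2s} = 2s$, $\wt z_j = -j$, $\wt\partial_j = j$, so that $\operatorname{ad}Q_0$ acts on differential operators as multiplication by the weight. A term-by-term inspection of the displayed expressions for $H_{2k}$ shows that each is homogeneous of weight $2k$, and $L_{2k}$ is homogeneous of the same weight by \eqref{L0}; hence $Q_{2k}$ is homogeneous of weight $2k$ and $[Q_0, Q_{2k}] = 2k Q_{2k}$.

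For the remaining relations the logic is as follows. The $\lambda$-vector-field part of $\sum_j \mu_j Q_{2j}$, for any coefficients $\mu_j \in \mathbb{Q}[\lambda]$, is $\sum_j \mu_j L_{2j}$, and the fields $L_{2j}$ are linearly independent over $\mathbb{Q}[\lambda]$ on $\mathcal{B}$. Since the $\lambda$-part of $[Q_{2k}, Q_{2m}]$ equals $[L_{2k}, L_{2m}]$, Lemmas \ref{lg2} and \ref{lg3} already force the coefficients $\mu_j$ to be precisely the entries of $\mathcal{M}$ in \eqref{M2}, \eqref{M3}, provided the bracket closes on the $Q_{2j}$ at all. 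This explains why the same matrix $\mathcal{M}$ appears, and reduces the genuine content to a single identity among $z$-operators,
\[
L_{2k}(H_{2m}) - L_{2m}(H_{2k}) - [H_{2k}, H_{2m}] = \sum_j \mu_j H_{2j},
\]
to be checked for each admissible pair $(2k, 2m)$ in the cases $g = 2, 3$. I would verify it by matching coefficients separately for the four monomial types of \eqref{e2} --- the $\partial_a\partial_b$, the $z_a\partial_b$, the $z_a z_b$, and the constant terms --- noting that the $\partial_a\partial_b$ part receives no contribution from $L_{2k}(H_{2m})$, the symbols $\alpha^{(m)}_{a,b}$ being constants, and therefore comes solely from the commutator; homogeneity in the weight grading limits the monomials occurring on each side to those of weight $2k + 2m$.

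The main obstacle is precisely this last verification. The essential structural point to establish is that $[H_{2k}, H_{2m}]$ is again an operator of the form \eqref{e2}, quadratic in $(z, \partial)$ up to a constant: the fourth-order part cancels because the leading $\partial_a\partial_b$ coefficients are constant and commute, which is the metaplectic-type closure of such quadratic operators. Granting this, the check is finite but heavy, the laborious parts being the $z_a z_b$- and constant-coefficient contributions, in which the quadratic-in-$\lambda$ potentials $\gamma^{(j)}$ and the terms $\delta^{(j)} = c_j \lambda_{2j}$ must combine, via the Lie derivatives $L_{2k}(\delta^{(m)}) = c_m T_{2k+2,\,2m-2}$ and the lower-order part of the commutator, into exactly $\sum_j \mu_j H_{2j}$ with the $\mu_j$ fixed above. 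I expect the most efficient order of work is to read off the $\mu_j$ from the $\partial_a\partial_b$ and $z_a\partial_b$ parts first and then use the already-known entries of $\mathcal{M}$ as a stringent consistency check on the remaining quadratic and constant terms.
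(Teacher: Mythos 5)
Your proposal is correct and takes essentially the same route as the paper, whose proof is stated simply as ``obtained by direct calculation of the corresponding commutators.'' Your organization of that calculation --- the grading argument for $[Q_0,Q_{2k}]$, the observation that the $\lambda$-vector-field part is already settled by Lemmas~\ref{lg2} and~\ref{lg3} so that only the quadratic $z$-operator identity remains, and the closure of the quadratic operators under commutator --- is a sound and efficient way to carry out exactly the verification the paper asserts.
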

\textsc{The proof} is obtained by direct calculation of the corresponding commutators.

\begin{dfn} \label{def22}
The Lie algebra of heat operators in a nonholonomic frame $\{L_{2k}\}$ is a~Lie~algebra over $\mathbb{C}$, which is additively a free $\mathbb{Q}[\lambda]$-module with generators $\{Q_{2k}\}$ and commutation relations $[Q_{2k}, \lambda_{2s}]$, given by Lemma \ref{lemlQ}, and $[Q_{2k}, Q_{2s}]$, given by~Lemma~\ref{lemQ}. The~commutator of basic elements of the form $p(\lambda) Q_{2k}$ for $p(\lambda) \in \mathbb{Q}[\lambda]$ is calculated according to the Leibniz rule. 
\end{dfn}

We denote by $\mathscr{L}_Q$ the Lie algebra from Definition \ref{def22}.

\begin{thm}
The Lie algebra $\mathscr{L}_Q$ as a Lie~algebra over $\mathbb{C}$ is a Lie~subalgebra of the~Lie~algebra of linear operators on the ring of smooth functions $\Phi$.
\end{thm}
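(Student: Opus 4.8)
The plan is to verify that the abstractly defined Lie algebra $\mathscr{L}_Q$ of Definition \ref{def22} embeds faithfully as a Lie subalgebra of the associative algebra $\operatorname{End}(\Phi)$, equipped with its commutator bracket. The essential point is that the generators $Q_{2k}$ and the multiplication operators $\lambda_{2s}$ already live inside $\operatorname{End}(\Phi)$, so the only thing to check is that the subspace $V \subseteq \operatorname{End}(\Phi)$ they span over $\mathbb{C}$ is closed under the operator commutator, and that this operator commutator agrees with the abstractly prescribed bracket of Definition \ref{def22}. First I would describe $V$ explicitly as the $\mathbb{Q}[\lambda]$-linear span of the $Q_{2k}$ together with the image of $\mathbb{Q}[\lambda]$ itself (the multiplication operators), i.e.\ the set of operators $\sum_k p_k(\lambda) Q_{2k} + p(\lambda)$ with $p_k, p \in \mathbb{Q}[\lambda]$; a general element of $\mathscr{L}_Q$ maps to such an operator under the obvious assignment.

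Next I would check closure of $V$ under commutators on generators, which by bilinearity and the Leibniz rule suffices. There are three types of brackets to control. The brackets $[\lambda_{2s}, \lambda_{2t}] = 0$ are immediate since multiplication operators commute. The brackets $[Q_{2k}, \lambda_{2s}]$ are handled by Lemma \ref{lemlQ}, which shows they equal the polynomials $T_{2k+2,2s-2} \in \mathbb{Q}[\lambda]$ and hence land in the multiplication part of $V$. The brackets $[Q_{2k}, Q_{2s}]$ are handled by Lemma \ref{lemQ}, which expresses each such commutator as a $\mathbb{Q}[\lambda]$-linear combination of the $Q_{2j}$, again inside $V$. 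For a general pair of elements $p(\lambda)Q_{2k}$ and $q(\lambda)Q_{2s}$ one expands
\[
[p\,Q_{2k},\, q\,Q_{2s}] = p\,q\,[Q_{2k},Q_{2s}] + p\,[Q_{2k},q]\,Q_{2s} - q\,[Q_{2s},p]\,Q_{2k},
\]
and each of the three terms lies in $V$ by the preceding cases together with the fact that $\mathbb{Q}[\lambda]$ is closed under multiplication. Since $\operatorname{End}(\Phi)$ is associative, its commutator automatically satisfies the Jacobi identity, so $V$ is genuinely a Lie subalgebra over $\mathbb{C}$; no separate Jacobi verification is needed.

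It then remains to observe that the assignment sending each generator of $\mathscr{L}_Q$ to the corresponding operator is a Lie algebra homomorphism, because by construction the defining relations of $\mathscr{L}_Q$ in Definition \ref{def22} are precisely the operator identities of Lemmas \ref{lemlQ} and \ref{lemQ}. I expect the main conceptual obstacle to be faithfulness: one must argue that the abstract $\mathbb{Q}[\lambda]$-module structure is not collapsed, i.e.\ that the operators $\{Q_{2k}\}$ are linearly independent over $\mathbb{Q}[\lambda]$ in $\operatorname{End}(\Phi)$. This follows from inspecting leading symbols: each $Q_{2k} = L_{2k} - H_{2k}$ has a distinct principal part, namely the first-order field $L_{2k}$ acting in the $\lambda$-variables together with the second-order piece $\tfrac12\partial_a\partial_b$ with $a+b = 2k$ acting in the $z$-variables, so no nontrivial $\mathbb{Q}[\lambda]$-combination $\sum_k p_k(\lambda)Q_{2k}$ can equal a multiplication operator unless all $p_k = 0$. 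Hence the map is injective and $\mathscr{L}_Q \cong V$ as Lie algebras over $\mathbb{C}$, completing the proof.
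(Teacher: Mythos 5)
Your proposal is correct and follows essentially the same route as the paper, whose entire proof is the single sentence that the result ``follows from the construction of the Lie algebra $\mathscr{L}_Q$'' --- i.e.\ exactly the observation that the generators are already concrete operators on $\Phi$ and that Lemmas \ref{lemlQ} and \ref{lemQ} realize the defining relations as operator identities. Your additional check of faithfulness via the leading symbols of the $Q_{2k}$ is a detail the paper leaves implicit, but it does not change the approach.
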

  
\textsc{The proof} follows from the construction of the Lie~algebra $\mathscr{L}_Q$.

\begin{thm}
For $g=2$, the function $\varphi = \varphi(z, \lambda)$ gives a solution of the system of heat equations $Q_{2k} \varphi = 0$, $k = 0,1,2,3$, if and only if
\begin{align*}
Q_0 \varphi &= 0, & Q_2 \varphi &= 0, & Q_4 \varphi &= 0.
\end{align*}
\end{thm}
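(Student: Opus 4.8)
The plan is to observe that the only nontrivial content of the statement is the implication $(Q_0\varphi = Q_2\varphi = Q_4\varphi = 0) \Rightarrow Q_6\varphi = 0$; the reverse implication is immediate, since the three listed equations form a subset of the four. The strategy for the nontrivial direction is to use the commutation relations of Lemma~\ref{lemQ} to express the operator $Q_6$ through $Q_0$, $Q_2$, and the commutator $[Q_2, Q_4]$, and then to apply this operator identity to $\varphi$.

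Concretely, I would read off the first row of the matrix $\mathcal{M}$ in \eqref{Q2}. That row yields the operator identity
\[
[Q_2, Q_4] = \tfrac{8}{5}\lambda_6 Q_0 - \tfrac{8}{5}\lambda_4 Q_2 + 2 Q_6,
\]
valid in $\mathscr{L}_Q$ and hence, by the preceding theorem, as an identity of linear operators on $\Phi$. The decisive structural feature is that the coefficient of $Q_6$ here equals $\tfrac{2}{5}\cdot 5 = 2 \neq 0$, so the relation can be solved for $Q_6$:
\[
Q_6 = \tfrac{1}{2}[Q_2, Q_4] - \tfrac{4}{5}\lambda_6 Q_0 + \tfrac{4}{5}\lambda_4 Q_2.
\]

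Next I would apply both sides to a function $\varphi$ satisfying $Q_0\varphi = Q_2\varphi = Q_4\varphi = 0$. The two multiplication terms $\lambda_6 Q_0 \varphi$ and $\lambda_4 Q_2 \varphi$ vanish at once. For the commutator term, expanding the bracket gives $[Q_2, Q_4]\varphi = Q_2(Q_4\varphi) - Q_4(Q_2\varphi) = Q_2(0) - Q_4(0) = 0$, where I use $Q_4\varphi = 0$ inside the first summand and $Q_2\varphi = 0$ inside the second. Hence $Q_6\varphi = 0$, which establishes the converse and completes the proof.

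The argument has essentially no obstacle beyond recognizing that the coefficient of $Q_6$ in the expression for $[Q_2, Q_4]$ is nonzero; this is precisely what permits eliminating $Q_6$ from the list of independent constraints. Structurally this reflects that $Q_6$ lies in the span, over the multiplication operators $\mathbb{Q}[\lambda]$, of $Q_0, Q_2, Q_4$ together with their brackets, so that the left ideal of the system generated by $Q_0, Q_2, Q_4$ already contains $Q_6$.
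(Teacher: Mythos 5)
Your proposal is correct and follows exactly the paper's argument: the proof there likewise reads off from the relations \eqref{Q2} that $Q_6$ is a $\mathbb{Q}[\lambda]$-linear combination of $[Q_2,Q_4]$, $Q_0$, and $Q_2$, and concludes. You merely make explicit the coefficient computation and the evaluation of the commutator on $\varphi$, which the paper leaves implicit.
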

\begin{proof}
From the relations \eqref{Q2} we get the expression for $Q_6$ as a linear combination of~$[Q_2, Q_4]$, $Q_0$, and $Q_2$, with coefficients in $\mathbb{Q}[\lambda]$.
\end{proof}

\begin{thm}
For $g=3$, the function $\varphi = \varphi(z, \lambda)$ gives a solution of the system of heat equations $Q_{2k} \varphi = 0$, $k = 0,1,2,3,4,5$, if and only if
\begin{align*}
Q_0 \varphi &= 0, & Q_2 \varphi &= 0, & Q_4 \varphi &= 0.
\end{align*}
\end{thm}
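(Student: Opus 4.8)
The plan is to imitate the one-step argument that settled the $g=2$ case, but now iterated three times, exploiting the triangular structure of the commutation relations \eqref{Q3}. The forward implication is immediate: if $Q_{2k}\varphi = 0$ for all $k = 0,1,\ldots,5$, then in particular $Q_0\varphi = Q_2\varphi = Q_4\varphi = 0$. So the entire content of the theorem is the converse.

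For the converse I would first read off from \eqref{Q3} the key triangularity: in the rows for $[Q_2,Q_4]$, $[Q_2,Q_6]$, and $[Q_2,Q_8]$ the operators $Q_6$, $Q_8$, and $Q_{10}$ occur with the nonzero scalar coefficients $2$, $4$, and $6$ (namely $\tfrac{2}{7}$ times the entries $7$, $14$, $21$), while every remaining term on the right-hand side involves only the strictly lower-index operators $Q_0, Q_2, Q_4, Q_6, Q_8$ with coefficients in $\mathbb{Q}[\lambda]$. Because these scalar coefficients are invertible, I can solve each of the three relations for its top operator and obtain, as identities in $\mathscr{L}_Q$,
\begin{align*}
Q_6 &= \tfrac{1}{2}[Q_2,Q_4] - \tfrac{8}{7}\lambda_6 Q_0 + \tfrac{8}{7}\lambda_4 Q_2,\\
Q_8 &= \tfrac{1}{4}[Q_2,Q_6] - \tfrac{3}{7}\lambda_8 Q_0 + \tfrac{3}{7}\lambda_4 Q_4,\\
Q_{10} &= \tfrac{1}{6}[Q_2,Q_8] - \tfrac{4}{21}\lambda_{10} Q_0 + \tfrac{4}{21}\lambda_4 Q_6.
\end{align*}

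Then I would run the chain. Assuming $Q_0\varphi = Q_2\varphi = Q_4\varphi = 0$, I apply the first identity to $\varphi$ and use $[Q_2,Q_4]\varphi = Q_2(Q_4\varphi) - Q_4(Q_2\varphi) = 0$ together with $Q_0\varphi = Q_2\varphi = 0$ to get $Q_6\varphi = 0$. With $Q_6\varphi = 0$ now in hand, the second identity, $[Q_2,Q_6]\varphi = Q_2(Q_6\varphi) - Q_6(Q_2\varphi) = 0$, and $Q_0\varphi = Q_4\varphi = 0$ give $Q_8\varphi = 0$. Finally the third identity, $[Q_2,Q_8]\varphi = Q_2(Q_8\varphi) - Q_8(Q_2\varphi) = 0$, and $Q_0\varphi = Q_6\varphi = 0$ yield $Q_{10}\varphi = 0$, so all six equations hold.

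The only point that genuinely needs checking is the triangularity claim itself — that in each of the three chosen rows of $\mathcal{M}$ the next operator $Q_{2k+2}$ appears with a nonzero constant and that no operator of index exceeding the current top one is present — and this is read off directly from \eqref{Q3}. The structural reason the argument closes is that the adjoint action of $Q_2$ raises the weight by $2$, so $[Q_2,Q_{2k}]$ always reaches $Q_{2k+2}$ with an invertible scalar coefficient; beyond this observation and the routine bookkeeping of the recursion there is no real obstacle, exactly as in the $g=2$ proof which required only the single relation for $[Q_2,Q_4]$.
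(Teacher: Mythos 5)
Your proposal is correct and follows essentially the same route as the paper: the paper's one-line proof also solves the relevant rows of \eqref{Q3} for $Q_6$, $Q_8$, $Q_{10}$ in terms of $Q_0$, $Q_2$, $Q_4$ and their (iterated) commutators with $\mathbb{Q}[\lambda]$ coefficients, and your three explicit identities together with the recursive application to $\varphi$ are just that argument written out in full.
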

\begin{proof}
From the relations \eqref{Q3} we get the expressions for $Q_6$, $Q_8$, and~$Q_{10}$ as a linear combination of $Q_0$, $Q_2$, $Q_4$ and their commutators, with coefficients in $\mathbb{Q}[\lambda]$.
\end{proof}

\vfill

\eject

\section{Graded Lie--Rinehart algebras and polynomial Lie algebras} 

We give the basic definitions following the works \cite{BPol, PMil, Rine}. Set $\lambda = (\lambda_4, \lambda_6, \ldots, \lambda_{4g+2})$.

\begin{dfn}[\cite{Rine}]
Let $R$ be a commutative associative ring with unity, let $A$ be a~commutative $R$-algebra, and let $\mathscr{L}$ be a Lie algebra over $R$. The pair $(A, \mathscr{L})$ is called a~\emph{Lie--Rinehart algebra}, if 
\begin{enumerate}
 \item The Lie algebra $\mathscr{L}$ acts on $A$ by left derivations, i.e.
 \[
  X(ab) = X(a) b + a X(b), \quad \forall a,b \in A, \quad \forall X \in \mathscr{L};
 \]
 \item The Lie algebra $\mathscr{L}$ is a left $A$--module;
 \item The pair $(A, \mathscr{L})$ has to satisfy the following compatibility conditions:
\[
 [X, a Y] = X(a) Y + a [X, Y], \quad (a X)(b) = a (X(b)) \quad \text{for} \quad \forall X, Y \in \mathscr{L}, \quad \forall a, b \in A.
\]
\end{enumerate}
\end{dfn}

Consider a graded $R$--algebra $A = R[\lambda_4, \lambda_6, \ldots, \lambda_{4g+2}]$ of polynomials in $\lambda$ over $R$. The~weights $m_k$ of its generators $\lambda_k$ are generally given as
$
 \wt \lambda_k = m_k$,  $m_k \in \mathbb{Z}.
$
We~use the notation where $m_k = k$.

\begin{dfn}
A Lie--Rinehart algebra $(A, \mathscr{L})$ over the ring $R$ is called\\ a~$(2g, N)$--\emph{polynomial Lie algebra}, if
\begin{enumerate}
 \item $A = R[\lambda_4, \lambda_6, \ldots, \lambda_{4g+2}]$;
 \item $\mathscr{L}$ is a free left module of rank $N$ over a graded polynomial algebra $A$;
 \item $\mathscr{L} = \oplus_{i \in \mathbb{Z}} \mathscr{L}_i$ is a $\mathbb{Z}$-graded Lie algebra,
 \[
  [\mathscr{L}_i, \mathscr{L}_j] \subset \mathscr{L}_{i+j}, \quad i,j \in \mathbb{Z},
 \]
and its grading is consistent with the grading of $A$:
\[
p(\lambda) L \in \mathscr{L}_{i+ \wt p(\lambda)},
\quad \wt(L(q(\lambda))) = \wt(q(\lambda)) + i, \quad L \in \mathscr{L}_i,
\]
where $p(\lambda)$ and $q(\lambda)$ are homogeneous polynomials in $A$ with weights $\wt p(\lambda)$ and~$\wt q(\lambda)$, respectively. 
\end{enumerate}
\end{dfn}

In \cite{BPol, ZS} it was proposed to define a $(2g,N)$--polynomial Lie algebra using its special basis (framing) $\{L_k\}$ of $N$ elements in free left $A$--module $\mathscr{L}$ such that each element of~the~basis is a homogeneous element in the $\mathbb{Z}$--graded Lie algebra $\mathscr{L}$. Herewith the framing should determine a representation of~the $A$--module $\mathscr{L}$ in the differential ring of~the graded polynomial algebra $A$.

In the present work we use the notation $\wt L_k = k$.
We obtain the structural relations
\begin{equation} \label{struct}
[L_i, L_j] = \sum_{k} c_{i,j}^k L_k, \qquad [L_k, \lambda_q] = v_k^q,  
\end{equation}
where $c_{i,j}^k$, $v_i^q$ are homogeneous polynomials in the graded algebra $R[\lambda_4, \lambda_6, \ldots, \lambda_{4 g + 2}]$, and~$\wt c_{i,j}^k = i + j - k$, $\wt v_k^q = k + q$.

\begin{lem} \label{T31}
The vector fields $\{L_{2k}\}$ from \S \ref{S1} for $g = 1,2,3$ determine a framing of the~$(2g,2g)$-polynomial Lie algebra where
$A = \mathbb{Q}[\lambda]$.
\end{lem}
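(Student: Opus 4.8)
The plan is to verify, for each $g \in \{1,2,3\}$, the three defining conditions of a framing of a $(2g,2g)$-polynomial Lie algebra against the data already assembled in \S\ref{S1}, with $R = \mathbb{Q}$ and $A = \mathbb{Q}[\lambda]$. The generators are the $2g$ fields $L_0, L_2, \ldots, L_{4g-2}$, so the rank is $N = 2g$, as required. First I would record the grading data: relation \eqref{L0}, namely $[L_0, L_{2k}] = 2k L_{2k}$, identifies $L_0$ with the Euler field inducing $\wt \lambda_k = k$ on $A$ and shows that each $L_{2k}$ is homogeneous of weight $\wt L_{2k} = 2k$. Thus every framing element is a homogeneous element of the $\mathbb{Z}$-graded module, as the definition demands.

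Next I would check the representation of $\mathscr{L}$ in the differential ring of $A$, i.e.\ that $[L_{2k}, \lambda_{2q}] = v_{2k}^{2q}$ lies in $A$ with the correct weight. By Lemma \ref{lem1} the coefficients of $L_{2k}$ are the polynomials $T_{2k+2, 2s-2} \in \mathbb{Q}[\lambda]$, so $L_{2k}$ is a derivation of $\mathbb{Q}[\lambda]$ and $[L_{2k}, \lambda_{2q}] = T_{2k+2, 2q-2}$. From the explicit formula for $T_{2k,2m}$ one reads off $\wt T_{2k,2m} = 2k+2m$ (its leading term is $2(k+m)\lambda_{2(k+m)}$), whence $\wt v_{2k}^{2q} = \wt T_{2k+2,2q-2} = 2k+2q$, matching the requirement $\wt v_{k}^{q} = k+q$ of \eqref{struct}.

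For closure of the bracket I would invoke the structural relations directly: for $g=1$ the single relation $[L_0, L_2] = 2 L_2$ already expresses the bracket as an $A$-combination of generators, while for $g=2$ and $g=3$ Lemmas \ref{lg2} and \ref{lg3} express every $[L_{2i}, L_{2j}]$ as $\sum_k c_{2i,2j}^{2k} L_{2k}$ with $c_{2i,2j}^{2k} \in \mathbb{Q}[\lambda]$ read off from the matrices $\mathcal{M}$ in \eqref{M2} and \eqref{M3}. A short weight count on each row of $\mathcal{M}$ then confirms $\wt c_{2i,2j}^{2k} = 2i+2j-2k$; combined with the Leibniz rule $[X, aY] = X(a)Y + a[X,Y]$, which holds automatically for vector fields, this yields $[\mathscr{L}_i, \mathscr{L}_j] \subset \mathscr{L}_{i+j}$, so $\mathscr{L} = \bigoplus_i \mathscr{L}_i$ is a graded Lie algebra consistent with the grading of $A$. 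The Jacobi identity and the Lie--Rinehart compatibility conditions are inherited from the ambient Lie algebra of polynomial vector fields on $\mathbb{C}^{2g}$ and so need no separate verification.

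The one point not already packaged as a cited lemma, and the step I expect to take the most care, is freeness of $\mathscr{L}$ as an $A$-module of rank exactly $2g$: I must show the $\{L_{2k}\}$ are linearly independent over $\mathbb{Q}[\lambda]$, not merely over $\mathbb{C}$. For this I would use the fact recorded in \S\ref{S1} that the fields are linearly independent at every point of $\mathcal{B} = \mathbb{C}^{2g}\setminus \Sigma$; since $\mathcal{B}$ is Zariski dense, any relation $\sum_k p_k(\lambda) L_{2k} = 0$ with $p_k \in \mathbb{Q}[\lambda]$ would force each $p_k$ to vanish on $\mathcal{B}$ and hence identically. Equivalently, the coefficient matrix $T$ has nonvanishing determinant on $\mathcal{B}$, so $\det T$ is a nonzero element of $\mathbb{Q}[\lambda]$ (for $g=1$ one checks directly that $\det T = \tfrac{4}{3}(-4\lambda_4^3 - 27\lambda_6^2)$ is a nonzero multiple of the discriminant), and $T$ is invertible over the fraction field $\mathbb{Q}(\lambda)$. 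Hence $\mathscr{L} = \bigoplus_k \mathbb{Q}[\lambda]\, L_{2k}$ is free of rank $2g$. Assembling these four points establishes the framing for each $g = 1,2,3$.
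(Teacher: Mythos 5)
Your proof is correct and follows essentially the same route as the paper, whose one-line argument simply cites \eqref{L0}, the explicit matrices \eqref{T10}, \eqref{T2}, \eqref{T3}, and Lemmas \ref{lg2}--\ref{lg3} --- exactly the ingredients you verify in detail. The only point you add beyond what the paper records is the explicit check of $\mathbb{Q}[\lambda]$-linear independence of the $\{L_{2k}\}$ via the nonvanishing of $\det T$ on $\mathcal{B}$, which is a natural completion of the same approach rather than a different one.
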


\textsc{The proof} follows from \eqref{L0}, \eqref{T10}, \eqref{T2}, \eqref{T3} and the Lemmas \ref{lg2}-\ref{lg3}. \vspace{1mm}

We denote by $\mathscr{L}_L$ the polynomial Lie algebra from Lemma \ref{T31}.

\begin{thm} \label{T32}
The map $\eta\colon L_{2k} \to Q_{2k}$ gives an isomorphism of the Lie algebras $\mathscr{L}_L$ and $\mathscr{L}_Q$ as $\mathbb{Q}[\lambda]$--modules. 
That is, for the set $\{Q_{2k}\}$ the structure polynomials in the expressions \eqref{struct} coincide with the structure polynomials for the set $\{L_{2k}\}$.
\end{thm}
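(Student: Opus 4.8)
The plan is to exploit the fact that, by construction, both $\mathscr{L}_L$ and $\mathscr{L}_Q$ are free $\mathbb{Q}[\lambda]$-modules of rank $2g$ on the generators $\{L_{2k}\}$ and $\{Q_{2k}\}$ respectively, with brackets extended from the generators by the same Leibniz rule over $\mathbb{Q}[\lambda]$. Consequently the $\mathbb{Q}[\lambda]$-linear map $\eta$ carrying the basis $\{L_{2k}\}$ to the basis $\{Q_{2k}\}$ is automatically an isomorphism of $\mathbb{Q}[\lambda]$-modules, and the only thing left to verify is that it intertwines the two collections of structure polynomials in \eqref{struct}: the constants $c_{i,j}^k$ governing the brackets of generators, and the polynomials $v_k^q$ describing the action on $\lambda$.

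First I would compare the action polynomials $v_k^q$. For $\mathscr{L}_L$ the field $L_{2k}$ acts on $A=\mathbb{Q}[\lambda]$ as a derivation, and \eqref{Lk} gives $L_{2k}(\lambda_{2s}) = T_{2k+2,2s-2}$. For $\mathscr{L}_Q$ the action on $\mathbb{Q}[\lambda]$ is realized through the operator commutator, and Lemma \ref{lemlQ} gives $[Q_{2k},\lambda_{2s}] = T_{2k+2,2s-2}$; since $\mathrm{ad}\,Q_{2k}$ is a derivation of the associative operator algebra and the $\lambda_{2s}$ are mutually commuting multiplication operators, this commutator defines exactly the derivation of $\mathbb{Q}[\lambda]$ attached to the vector field $L_{2k}$. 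Hence the polynomials $v_k^q$ coincide for the two algebras.

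Next I would compare the structure constants $c_{i,j}^k$. The relation $[L_0,L_{2k}] = 2k L_{2k}$ in \eqref{L0} matches $[Q_0,Q_{2k}] = 2k Q_{2k}$ of Lemma \ref{lemQ}, and for the remaining brackets the matrix $\mathcal{M}$ of Lemmas \ref{lg2}, \ref{lg3} (equations \eqref{M2}, \eqref{M3}) is literally identical to the matrix $\mathcal{M}$ appearing in \eqref{Q2}, \eqref{Q3}. Therefore $\eta([L_{2i},L_{2j}]) = [Q_{2i},Q_{2j}]$ on all generators, so the $c_{i,j}^k$ agree as well.

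Finally I would check compatibility with the module structure, i.e.\ that $\eta$ respects the full Leibniz-extended bracket. Expanding $[p(\lambda)L_{2i},q(\lambda)L_{2j}]$ produces a combination of the generator brackets with coefficients built from $p$, $q$ and the derivatives $L_{2i}(q)$, $L_{2j}(p)$; the identical expansion holds in $\mathscr{L}_Q$ with $Q_{2i}(q)=[Q_{2i},q]$. Because both the generator brackets and the derivation polynomials have already been matched, the two expansions correspond under $\eta$, so $\eta$ preserves brackets of arbitrary elements. No separate verification of the Jacobi identity is required: by the preceding theorem $\mathscr{L}_Q$ lies inside the Lie algebra of operators on $\Phi$, so its bracket is a genuine commutator. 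The sole piece of bookkeeping, and the one place demanding care, is confirming the verbatim agreement of the two matrices $\mathcal{M}$ together with the entries $T_{2k+2,2s-2}$; this is precisely what Lemmas \ref{lemlQ}, \ref{lemQ} and Lemmas \ref{lg2}, \ref{lg3} supply, and granting them the isomorphism is formal.
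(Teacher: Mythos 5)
Your argument is correct and follows essentially the same route as the paper: the authors likewise prove the theorem by observing that the matrices $\mathcal{M}$ in Lemmas \ref{lg2}, \ref{lg3} coincide with those in Lemma \ref{lemQ}, and that the second formula of \eqref{struct} follows from \eqref{comQ}. Your additional remarks on the Leibniz-rule extension and the identification of $[Q_{2k},\lambda_{2s}]$ with the derivation $L_{2k}(\lambda_{2s})$ only make explicit the formal bookkeeping the paper leaves implicit.
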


\begin{proof}
To verify the first formula from \eqref{struct} compare the Lemmas \ref{lg2}, \ref{lg3}, and~\ref{lemQ}. The~second  formula follows from \eqref{comQ}.
\end{proof}

\vfill

\eject

\section{Cole--Hopf transformation} \label{S4} 

The classical Cole--Hopf transformation (see \cite{Cole, Hopf}) turns the linear heat equation
\[
 {\partial \varphi \over \partial t} = \nu {\partial^2 \varphi \over \partial x^2}
\]
into the nonlinear Burgers equation
\begin{equation}\label{Bu0}
{\partial u \over \partial t} = \nu {\partial^2 u \over \partial x^2} - u {\partial u \over \partial x} , 
\end{equation}
where $u = - 2 \nu {\partial \over \partial x} \ln \varphi(x,t)$.

We present an analogy of this result for the case of genus $g=1,2,3$ based on heat operators $\{Q_{2k}\}$, introduced in \S \ref{S2}.
This problem is closely related to the problem of differentiation of Abelian functions over parameters presented in \S \ref{S5}. 

For each $g$, consider the system of heat equations \eqref{e3}
\[
 Q_{2k} \varphi = 0, \quad k = 0, 1, \ldots, 2 g - 1,
\]
for a smooth function $\varphi(z,\lambda)$.
Here we introduce the notation $\psi_{k} = - \partial_{k} \ln \varphi$, where $k \in \{ 1, 3, \ldots, 2 g - 1\}$. We consider a multidimensional analogy of the Cole--Hopf transformation, namely a system of equations for a vector function
$(\psi_{1}, \psi_{3}, \ldots, \psi_{2g-1})$,
equivalent to system \eqref{e3} for $\varphi$.

We will write this system in the form
\[
 \mathcal{L}_{2k} \psi_s = w_{2k,s},
\]
where $\mathcal{L}_{2k}$ are differential operators whose coefficients are linear expressions in $z$ and $(\psi_{1}, \psi_{3}, \ldots, \psi_{2g-1})$ over the ring $\mathbb{Q}[\lambda]$, and $w_{2k,s}$ are the corresponding linear functions over the ring $\mathbb{Q}[\lambda]$ in $z$, $(\psi_{1}, \psi_{3}, \ldots, \psi_{2g-1})$, and the second derivatives of the vector $(\psi_{1}, \psi_{3}, \ldots, \psi_{2g-1})$.

\subsection{Differential operators for $g=1$}
\begin{align*}
\mathcal{L}_0 &= L_0 - z_1 \partial_{1}, 
& \mathcal{L}_2 &= L_2 - \psi_1 \partial_{1}.
\end{align*}
The corresponding expressions are:
\begin{align*}
w_{0,1} &= \psi_1, & w_{2,1} &= {1 \over 2} \psi_{111} - {1 \over 3} \lambda_4 z_1.
\end{align*}

\begin{thm}
For $g = 1$ a solution $\varphi$ of the system of heat equations \eqref{e3} gives a solution $\psi_1 = \partial_1 \ln \varphi$ of the system of nonlinear differential equations that we call an~analog of~the~Burgers equation for $g = 1$:
\begin{equation} \label{Bu1}
\mathcal{L}_{2k} \psi_1 = w_{2k,1}, \qquad k = 0,1. 
\end{equation}
\end{thm}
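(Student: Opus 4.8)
The plan is to realize the classical Cole–Hopf mechanism in the present setting: divide each heat equation by $\varphi$ to pass to a first–order (in $z_1$) equation for $f = \ln\varphi$, and then differentiate once in $z_1$ to obtain the equations for $\psi_1 = \partial_1 f$. The nonlinear term encoded in the operator $\mathcal{L}_2$ should arise precisely from the quadratic expression produced when the logarithmic derivative of $\varphi$ is squared. First I would set $f = \ln\varphi$ and record the two elementary identities that drive the transformation: since $L_0$ and $L_2$ are vector fields acting only in the variables $\lambda$, one has $\varphi^{-1} L_{2k}\varphi = L_{2k} f$; and since $\partial_1\varphi = \varphi\,\partial_1 f$, one gets
\[
\varphi^{-1} L_{2k}\varphi = L_{2k} f, \qquad \varphi^{-1}\partial_1^2\varphi = \partial_1^2 f + (\partial_1 f)^2 .
\]

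Dividing $Q_0\varphi = 0$ and $Q_2\varphi = 0$ by $\varphi$, and substituting $H_0 = z_1\partial_1 - 1$ and $H_2 = \tfrac12\partial_1^2 - \tfrac16\lambda_4 z_1^2$, these identities convert the two heat equations into the first–order relations
\[
L_0 f = z_1 \partial_1 f - 1, \qquad L_2 f = \tfrac12\partial_1^2 f + \tfrac12 (\partial_1 f)^2 - \tfrac16 \lambda_4 z_1^2 .
\]
Next I would differentiate both relations with respect to $z_1$. The key structural point is that $L_0$ and $L_2$ commute with $\partial_1$, since they differentiate only in $\lambda$, so that $\partial_1 L_{2k} f = L_{2k}\psi_1$. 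For $k=0$ the right–hand side yields $\partial_1 f + z_1\partial_1^2 f = \psi_1 + z_1\partial_1\psi_1$, whence $L_0\psi_1 - z_1\partial_1\psi_1 = \psi_1$, that is $\mathcal{L}_0\psi_1 = w_{0,1}$. For $k=1$ the right–hand side yields $\tfrac12\partial_1^2\psi_1 + \psi_1\partial_1\psi_1 - \tfrac13\lambda_4 z_1$; using $\partial_1^2\psi_1 = \psi_{111}$ and transferring the quadratic term to the left reproduces the operator $\mathcal{L}_2 = L_2 - \psi_1\partial_1$ and leaves $w_{2,1} = \tfrac12\psi_{111} - \tfrac13\lambda_4 z_1$ on the right, so $\mathcal{L}_2\psi_1 = w_{2,1}$.

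The one genuinely nonlinear feature is the term $(\partial_1 f)^2$ coming from the logarithm: after differentiation it becomes $\psi_1\partial_1\psi_1$, and it is exactly this contribution that, moved to the operator side, produces the coefficient $-\psi_1$ in $\mathcal{L}_2$ and hence the nonlinearity of the analog of the Burgers equation; the remaining work is routine bookkeeping of derivatives. The main care needed is therefore twofold: justifying the commutation of $L_{2k}$ with $\partial_1$ and keeping track of the order of operations, and checking that the quadratic term matches the stated coefficient in $\mathcal{L}_2$. I would also note in passing that the computation above uses the sign convention $\psi_1 = \partial_1\ln\varphi$ of the statement; with the running notation $\psi_k = -\partial_k\ln\varphi$ the signs would merely be adjusted accordingly, and the argument proceeds verbatim.
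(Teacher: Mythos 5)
Your proof is correct and is exactly the ``direct computation'' that the paper invokes without writing out: divide the heat equations by $\varphi$, pass to $\ln\varphi$, and differentiate in $z_1$, using that $L_0,L_2$ act only in $\lambda$ and hence commute with $\partial_1$. Your closing remark on the sign convention is also the right reading: the identification $\psi_{111}=\partial_1^2\psi_1$ is what makes $w_{2,1}$ come out as stated, consistent with the paper's expanded form of \eqref{Bu1}.
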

\textsc{The proof} is the result of direct computation.

In it's expanded form, the system of equations \eqref{Bu1} for the function $u = \psi_1$ takes the~form
\begin{align*}
&L_0 u = z_1 \partial_{1} u + u,
&
&L_2 u = {1 \over 2} \partial_{1}^2 u + u \partial_{1} u - {1 \over 3} \lambda_4 z_1.
\end{align*}
The second equation of the system coincides with equation \eqref{Bu0} for $L_2 = - {\partial \over \partial t}$, $\lambda_4 = 0$, and $\nu = - {1 \over 2}$.

\subsection{Differential operators for $g=2$}
\begin{align*}
\mathcal{L}_0 &= L_0 - z_1 \partial_{1} - 3 z_3 \partial_{3}, &
\mathcal{L}_2 &= L_2 + \left(- \psi_1 + {4 \over 5} \lambda_4 z_3\right) \partial_{1} - z_1 \partial_{3}, \\
\mathcal{L}_6 &= L_6 + {3 \over 5} \lambda_8 z_3 \partial_{1} - \psi_3 \partial_{3}, &
\mathcal{L}_4 &= L_4 + \left(- \psi_3 + {6 \over 5} \lambda_6 z_3\right) \partial_{1} - \left(\psi_1 + \lambda_4 z_3\right) \partial_{3}.
\end{align*}
The corresponding expressions are:
\begin{align*}
w_{0,1} &= \psi_1, & w_{0,3} &= 3 \psi_3, \\ w_{2,1} &= {1 \over 2} \psi_{111} + \psi_3 - {3 \over 5} \lambda_4 z_1, & w_{2,3} &= {1 \over 2} \psi_{113} - {4 \over 5} \lambda_4 \psi_1 + \left(3 \lambda_8 - {4 \over 5} \lambda_4^2\right) z_3,\\
w_{4,1} &= \psi_{113} - {2 \over 5} \lambda_6 z_1 + \lambda_8 z_3, & w_{4,3} &= \psi_{133} - {6 \over 5} \lambda_6 \psi_1 + \lambda_4 \psi_3 + \lambda_8 z_1 + \left( 6 \lambda_{10} + {6 \over 5} \lambda_4 \lambda_6 \right) z_3, \\
w_{6,1} &= {1 \over 2} \psi_{133} - {1 \over 5} \lambda_8 z_1 + 2 \lambda_{10} z_3,
& w_{6,3} &= {1 \over 2} \psi_{333} - {3 \over 5} \lambda_8 \psi_1 + 2 \lambda_{10} z_1 - {3 \over 5} \lambda_4 \lambda_8 z_3.
\end{align*}

\begin{thm}
For $g = 2$ a solution $\varphi$ of the system of heat equations \eqref{e3} gives a~solution $(\psi_1, \psi_3) = (\partial_1 \ln \varphi, \partial_3 \ln \varphi)$ of the system of nonlinear differential equations that we call an analog of the Burgers equation for $g = 2$:
\begin{equation} \label{Bu2}
\mathcal{L}_{2k} (\psi_1, \psi_3) = (w_{2k,1}, w_{2k,3}), \qquad k = 0,1,2,3.
\end{equation}
\end{thm}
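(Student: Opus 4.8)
\textsc{Proof plan.} The plan is to reduce each heat equation $Q_{2k}\varphi = 0$ to a single scalar identity in the logarithmic derivatives of $\varphi$ and then to differentiate that identity once in $z$. Writing $f = \ln\varphi$ and $\psi_s = \partial_s f$ for $s \in \{1,3\}$ (as in the statement), I would first divide $Q_{2k}\varphi = (L_{2k} - H_{2k})\varphi = 0$ by $\varphi$. Since each $L_{2k}$ is a derivation acting only on $\lambda$, one has $L_{2k}\varphi/\varphi = L_{2k} f$; and since $H_{2k}$ has the shape \eqref{e2}, the elementary identities
\[
\frac{\partial_a\partial_b\varphi}{\varphi} = \partial_a\psi_b + \psi_a\psi_b, \qquad \frac{z_a\partial_b\varphi}{\varphi} = z_a\psi_b, \qquad \frac{z_az_b\varphi}{\varphi} = z_az_b
\]
turn $H_{2k}\varphi/\varphi$ into a polynomial expression in $z$, $\psi$, and the first derivatives $\partial_a\psi_b$. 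This yields a scalar equation $E_{2k} := L_{2k} f - H_{2k}\varphi/\varphi = 0$ of Riccati type.

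Next I would apply $\partial_s$ to $E_{2k} = 0$. Because the coefficients $T_{2k+2,2s-2}$ of $L_{2k}$ depend only on $\lambda$, the operators $\partial_s$ and $L_{2k}$ commute, so $\partial_s(L_{2k} f) = L_{2k}\psi_s$; this becomes the leading part of $\mathcal{L}_{2k}\psi_s$. Differentiating the remaining terms of $E_{2k}$ produces third-order derivatives $\psi_{abc}$ together with $z$- and $\psi$-linear terms (which assemble into $w_{2k,s}$), plus \emph{convective} terms of the form $(\text{coefficient in }z,\psi)\cdot\partial_a\psi_b$ coming from $\partial_s$ acting on the quadratic piece $\tfrac12\psi_a\psi_b$ and the bilinear piece $z_a\psi_b$. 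The goal is to show that precisely these convective terms reassemble into the first-order $z$-operator $\mathcal{L}_{2k}-L_{2k}$ listed in the statement, so that the whole left-hand side equals $\mathcal{L}_{2k}\psi_s$ while the right-hand side equals $w_{2k,s}$.

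The structural input that makes the reassembly work is the integrability symmetry $\partial_a\psi_b = \partial_b\psi_a$, valid because the $\psi_s = \partial_s\ln\varphi$ are gradient components, i.e. $\partial_a\psi_b = \partial_a\partial_b f = \partial_b\psi_a$. For example, differentiating the term $z_1\psi_3$ of $H_2$ by $\partial_1$ gives $\psi_3 + z_1\partial_1\psi_3$, and the convective contribution $z_1\partial_1\psi_3$ must be matched against the term $-z_1\partial_3\psi_1$ appearing in $\mathcal{L}_2$; these agree exactly by $\partial_1\psi_3 = \partial_3\psi_1$. The same mechanism identifies every mixed term for $g=2$, and in each case the quadratic $\psi_a\partial_b\psi_c$ and bilinear $z_a\partial_b\psi_c$ contributions cancel in pairs after the symmetry is applied. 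I would carry out this comparison for each of the four operators $\mathcal{L}_0,\mathcal{L}_2,\mathcal{L}_4,\mathcal{L}_6$ and each index $s\in\{1,3\}$, reading the coefficients of $H_0,H_2,H_4,H_6$ off from \S\ref{S2}.

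The main obstacle is bookkeeping rather than anything conceptual: one must track carefully which first-derivative-of-$\psi$ terms are absorbed into $\mathcal{L}_{2k}$ (the convective part) and which third-derivative and linear terms remain in $w_{2k,s}$, applying $\partial_a\psi_b = \partial_b\psi_a$ consistently to rewrite every mixed term such as $\partial_3\psi_1$ as $\partial_1\psi_3$. Once the coefficients are matched term by term against the displayed $\mathcal{L}_{2k}$ and $w_{2k,s}$ — exactly as in the $g=1$ case — the eight scalar identities of \eqref{Bu2} follow, which completes the proof.
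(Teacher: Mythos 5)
Your plan is correct and is precisely the ``direct computation'' that the paper invokes without detail: divide $Q_{2k}\varphi=0$ by $\varphi$ to get a Riccati-type identity in $\ln\varphi$, apply $\partial_s$, use $\partial_a\psi_b=\partial_b\psi_a$ to fold the convective terms into $\mathcal{L}_{2k}-L_{2k}$, and match the remainder with $w_{2k,s}$ (this reproduces, e.g., the displayed expanded form of \eqref{Bu2} term by term). No gap; the only caveat is the paper's own sign inconsistency between $\psi_k=-\partial_k\ln\varphi$ in \S\ref{S4} and $\psi_k=\partial_k\ln\varphi$ in the theorem statement, and you have consistently followed the latter.
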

\textsc{The proof} is the result of direct computation.

In it's expanded form, the system of equations \eqref{Bu2} takes the form
\begin{align*}
&L_0 \psi_1 = z_1 \partial_{1} \psi_1 + 3 z_3 \partial_{3} \psi_1 + \psi_1, \\
&L_0 \psi_3 = z_1 \partial_{1} \psi_3 + 3 z_3 \partial_{3} \psi_3 + 3 \psi_3, \\
&L_2 \psi_1 = {1 \over 2} \partial_{1}^2 \psi_{1} + \left(\psi_1 - {4 \over 5} \lambda_4 z_3\right) \partial_{1} \psi_1 + z_1 \partial_{3} \psi_1 +  \psi_3 - {3 \over 5} \lambda_4 z_1, \\
&L_2 \psi_3 = {1 \over 2} \partial_{1}^2 \psi_{3} + \left(\psi_1 - {4 \over 5} \lambda_4 z_3\right) \partial_{1} \psi_3 + z_1 \partial_{3} \psi_3 - {4 \over 5} \lambda_4 \psi_1 + \left(3 \lambda_8 - {4 \over 5} \lambda_4^2\right) z_3, \\
&L_4 \psi_1 = \partial_{1} \partial_{3} \psi_{1} + \left(\psi_3 - {6 \over 5} \lambda_6 z_3\right) \partial_{1} \psi_1 + \left(\psi_1 + \lambda_4 z_3\right) \partial_{3} \psi_1 - {2 \over 5} \lambda_6 z_1 + \lambda_8 z_3, \\
&L_4 \psi_3 = \partial_{1} \partial_{3} \psi_{3} + \left(\psi_3 - {6 \over 5} \lambda_6 z_3\right) \partial_{1} \psi_3 + \left(\psi_1 + \lambda_4 z_3\right) \partial_{3} \psi_3 - {6 \over 5} \lambda_6 \psi_1 + \lambda_4 \psi_3 + \\
& \qquad  \qquad \qquad \qquad \qquad \qquad \qquad \qquad \qquad \qquad \qquad + \lambda_8 z_1 + \left( 6 \lambda_{10} + {6 \over 5} \lambda_4 \lambda_6 \right) z_3, \\
&L_6 \psi_1 = {1 \over 2} \partial_{3}^2 \psi_{1} - {3 \over 5} \lambda_8 z_3 \partial_{1} \psi_1 + \psi_3 \partial_{3} \psi_1 - {1 \over 5} \lambda_8 z_1 + 2 \lambda_{10} z_3,\\
&L_6 \psi_3 = {1 \over 2} \partial_{3}^2 \psi_{3} - {3 \over 5} \lambda_8 z_3 \partial_{1} \psi_3 + \psi_3 \partial_{3} \psi_3 - {3 \over 5} \lambda_8 \psi_1 + 2 \lambda_{10} z_1 - {3 \over 5} \lambda_4 \lambda_8 z_3.
\end{align*}

\subsection{Differential operators for $g=3$}
\begin{align*}
\mathcal{L}_0 &= L_0 - z_1 \partial_{1} - 3 z_3 \partial_{3} - 5 z_5 \partial_{5}, \\
\mathcal{L}_2 &= L_2 - \left(\psi_1 - {8 \over 7} \lambda_4 z_3 \right) \partial_{1}
- \left( z_1 - {4 \over 7} \lambda_4 z_5 \right) \partial_{3} - 3 z_3 \partial_{5}, \\
\mathcal{L}_4 &= L_4 - \left(\psi_3 - {12 \over 7} \lambda_6 z_3 \right) \partial_{1}
- \left( \psi_1 + \lambda_4 z_3 - {6 \over 7} \lambda_6 z_5 \right) \partial_{3} - (z_1 + 3 \lambda_4 z_5) \partial_{5}, \\
\mathcal{L}_6 &= L_6 - \left(\psi_5 - {9 \over 7} \lambda_8 z_3 \right) \partial_{1}
- \left(\psi_3 - {8 \over 7} \lambda_8 z_5 \right) \partial_{3}
- \left(\psi_1 + \lambda_4 z_3 + 2 \lambda_6 z_5 \right) \partial_{5}, \\
\mathcal{L}_8 &= L_8 + \left({6 \over 7} \lambda_{10} z_3 - \lambda_{12} z_5\right) \partial_{1}
- \left(\psi_5 - {10 \over 7} \lambda_{10} z_5 \right) \partial_{3}
- \left(\psi_3 + \lambda_8 z_5 \right) \partial_{5}, \\
\mathcal{L}_{10} &= L_{10} + \left( {3 \over 7} \lambda_{12} z_3 - 2 \lambda_{14} z_5 \right) \partial_{1}
+ {5 \over 7} \lambda_{12} z_5 \partial_{3} - \psi_5 \partial_{5}. 
\end{align*}
The corresponding expressions are:
\begin{align*}
w_{0,1} &= \psi_1, \qquad w_{0,3} = 3 \psi_3, \qquad w_{0,5} = 5 \psi_5,\\
w_{2,1} &= {1 \over 2} \psi_{111} + \psi_3 - {5 \over 7} \lambda_4 z_1, \\
w_{2,3} &= {1 \over 2} \psi_{113} - {8 \over 7} \lambda_8 \psi_1 + 3 \psi_5 + \left(3 \lambda_8 - {8 \over 7} \lambda_4^2\right) z_3, \\
w_{2,5} &= {1 \over 2} \psi_{115} - {4 \over 7} \lambda_4 \psi_3 + \left( 5 \lambda_{12} - {4 \over 7} \lambda_4 \lambda_8 \right) z_5,\\
w_{4,1} &= \psi_{113} + \psi_5 - {4 \over 7} \lambda_6 z_1 + \lambda_8 z_3,\\
w_{4,3} &= \psi_{133} - {12 \over 7} \lambda_6 \psi_1 + \lambda_4 \psi_3 + \lambda_8 z_1 + \left(6 \lambda_{10} - {12 \over 7} \lambda_4 \lambda_6\right) z_3 + 3 \lambda_{12} z_5,\\
w_{4,5} &= \psi_{135} - {6 \over 7} \lambda_6 \psi_3 + 3 \lambda_4 \psi_5 + 3 \lambda_{12} z_3 + \left(10 \lambda_{14} - {6 \over 7} \lambda_6 \lambda_8\right) z_5, 
\\
w_{6,1} &= {1 \over 2} \psi_{133} + \psi_{115} - {3 \over 7} \lambda_6 z_1 + 2 \lambda_{10} z_3 + \lambda_{12} z_5,\\
w_{6,3} &= {1 \over 2} \psi_{333} + \psi_{135} - {9 \over 7} \lambda_8 \psi_1 + \lambda_4 \psi_5 + 2 \lambda_{10} z_1 + \left(9 \lambda_{12} - {9 \over 7} \lambda_4 \lambda_8\right) z_3 + 6 \lambda_{14} z_5,\\
w_{6,5} &= {1 \over 2} \psi_{335} + \psi_{155} - {8 \over 7} \lambda_8 \psi_3 + 2 \lambda_6 \psi_5 + \lambda_{12} z_1 + 6 \lambda_{14} z_3 + \left(3 \lambda_4 \lambda_{12} - {8 \over 7} \lambda_8^2 \right) z_5,
\\
w_{8,1} &= \psi_{135} - {2 \over 7} \lambda_{10} z_1 + 3 \lambda_{12} z_3 + 2 \lambda_{14} z_5,\\
w_{8,3} &= \psi_{335} - {6 \over 7} \lambda_{10} \psi_1 + 3 \lambda_{12} z_1 + (12 \lambda_{14} - {6 \over 7} \lambda_4 \lambda_{10}) z_3 + \lambda_4 \lambda_{12} z_5,\\
w_{8,5} &= \psi_{355} + \lambda_{12} \psi_1 - {10 \over 7} \lambda_{10} \psi_3 + \lambda_8 \psi_5 + \\
& \qquad + 2 \lambda_{14} z_1 + \lambda_4 \lambda_{12} z_3 + 2 \left(3 \lambda_4 \lambda_{14} + \lambda_6 \lambda_{12} - {5 \over 7} \lambda_8 \lambda_{10}\right) z_5,
\end{align*}
\begin{align*}
w_{10,1} &= {1 \over 2} \psi_{155} - {1 \over 7} \lambda_{12} z_1 + 4 \lambda_{14} z_3,\\
w_{10,3} &= {1 \over 2} \psi_{355} - {3 \over 7} \lambda_{12} \psi_1 + 4 \lambda_{14} z_1 - {3 \over 7} \lambda_4 \lambda_{12}z_3 + 2 \lambda_4 \lambda_{14} z_5,\\
w_{10,5} &= {1 \over 2} \psi_{555} + 2 \lambda_{14} \psi_1 - {5 \over 7} \lambda_{12} \psi_3 + 2 \lambda_4 \lambda_{14} z_3 + \left(4 \lambda_6 \lambda_{14} - {5 \over 7} \lambda_8 \lambda_{12}\right) z_5. 
\end{align*}

\begin{thm}
For $g = 3$ a solution $\varphi$ of the system of heat equations \eqref{e3} gives a solution $(\psi_1, \psi_3,  \psi_5) = (\partial_1 \ln \varphi, \partial_3 \ln \varphi, \partial_5 \ln \varphi)$ of the system of nonlinear differential equations that we call an analog of the Burgers equation for $g = 3$:
\[
\mathcal{L}_{2k} (\psi_1, \psi_3, \psi_5) = (w_{2k,1}, w_{2k,3}, w_{2k,5}),  \qquad k = 0,1,2,3,4,5.
\]
\end{thm}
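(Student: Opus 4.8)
The plan is to run, for $g=3$, the same logarithmic substitution that produces the $g=1,2$ analogues of the Burgers equation, but to set it up so that the mechanism is visible and only the coefficient-matching is left as computation. First I would rewrite each equation $Q_{2k}\varphi=0$ as $L_{2k}\varphi=H_{2k}\varphi$ and divide by $\varphi$. Because every $L_{2k}$ is a vector field in the $\lambda$-variables alone (see \eqref{Lk}), the left-hand side is simply $L_{2k}\varphi/\varphi=L_{2k}\ln\varphi$. On the right, the identities $\partial_b\varphi=\varphi\,\psi_b$ and $\partial_a\partial_b\varphi=\varphi(\psi_a\psi_b+\partial_a\psi_b)$, combined with the structure of $H_{2k}$ recorded in \eqref{e2} (so that $\alpha^{(k)}$ is constant, $\beta^{(k)}$ linear, $\gamma^{(k)}$ quadratic in $\lambda$, and $\delta^{(k)}=c_k\lambda_{2k}$), give the identity
\[
L_{2k}\ln\varphi=\frac{1}{2}\sum_{a,b}\alpha^{(k)}_{a,b}\bigl(\psi_a\psi_b+\partial_a\psi_b\bigr)+\sum_{a,b}\beta^{(k)}_{a,b}z_a\psi_b+\frac{1}{2}\sum_{a,b}\gamma^{(k)}_{a,b}z_a z_b+\delta^{(k)},
\]
where $a,b\in\{1,3,5\}$.

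Next I would apply $\partial_s$ to both sides for each $s\in\{1,3,5\}$. Two commutativity facts are the crux. First, $\partial_s$ commutes with $L_{2k}$, since the former differentiates only in $z$ and the latter only in $\lambda$; hence the left-hand side becomes $L_{2k}\psi_s$. Second, the equality of mixed partials $\partial_s\psi_b=\partial_b\psi_s$ (both equal $\partial_s\partial_b\ln\varphi$) lets me fold the derivative of each quadratic term into a first-order operator acting on $\psi_s$: differentiating $\tfrac12\alpha^{(k)}_{a,b}\psi_a\psi_b$ and summing yields $\sum_b\bigl(\sum_a\alpha^{(k)}_{a,b}\psi_a\bigr)\partial_b\psi_s$, while the $\beta$-terms similarly produce $\sum_b\bigl(\sum_a\beta^{(k)}_{a,b}z_a\bigr)\partial_b\psi_s$. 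Collecting exactly these contributions on the left defines
\[
\mathcal{L}_{2k}=L_{2k}-\sum_b\Bigl(\sum_a\alpha^{(k)}_{a,b}\psi_a+\sum_a\beta^{(k)}_{a,b}z_a\Bigr)\partial_b,
\]
and the remaining terms assemble into
\[
w_{2k,s}=\frac{1}{2}\sum_{a,b}\alpha^{(k)}_{a,b}\,\partial_s\partial_a\psi_b+\sum_b\beta^{(k)}_{s,b}\psi_b+\sum_b\gamma^{(k)}_{s,b}z_b,
\]
since $\partial_s\delta^{(k)}=0$. Thus $Q_{2k}\varphi=0$ forces $\mathcal{L}_{2k}\psi_s=w_{2k,s}$ for all $s$.

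It then remains to check that this recipe reproduces precisely the six operators $\mathcal{L}_0,\mathcal{L}_2,\dots,\mathcal{L}_{10}$ and the eighteen functions $w_{2k,s}$ listed above, reading the coefficients $\alpha^{(k)},\beta^{(k)},\gamma^{(k)}$ off the explicit $H_{2k}$ of the $g=3$ subsection. I expect no conceptual obstacle here — the argument is identical in structure to the $g=1,2$ cases, which are stated to follow by direct computation — so the only genuine work is the bookkeeping for the large $g=3$ system. The sole point demanding care is the symmetrization in the $\alpha$- and $\gamma$-sums: one must use $\alpha^{(k)}_{a,b}=\alpha^{(k)}_{b,a}$ and $\gamma^{(k)}_{a,b}=\gamma^{(k)}_{b,a}$ so that a cross term $c\,z_az_b$ in $H_{2k}$ contributes $c\,z_b$ to $w_{2k,a}$ and $c\,z_a$ to $w_{2k,b}$, while a square $c\,z_a^2$ yields $\gamma^{(k)}_{a,a}=2c$; tracking these factors correctly (for instance the doubled $z_5^2$-coefficient appearing in $w_{8,5}$) is where a careless sign or factor of two would slip in.
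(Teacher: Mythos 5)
Your derivation is correct and is exactly the ``direct computation'' the paper invokes: dividing $Q_{2k}\varphi=0$ by $\varphi$, using $\partial_a\partial_b\varphi/\varphi=\psi_a\psi_b+\partial_a\psi_b$, applying $\partial_s$, and exploiting $[\partial_s,L_{2k}]=0$ together with $\partial_s\psi_b=\partial_b\psi_s$ to peel off the first-order operator $\mathcal{L}_{2k}$ and the remainder $w_{2k,s}$ — a quick spot-check (e.g.\ $\mathcal{L}_2$ and $w_{2,1}$, $w_{2,3}$ for $g=3$) confirms your general formulas reproduce the listed data, including the symmetrization factors you flag. The only caveats are cosmetic: you should work with the sign convention of the theorem statement, $\psi_k=\partial_k\ln\varphi$ (the preamble's $\psi_k=-\partial_k\ln\varphi$ is inconsistent with the displayed formulas), and note that the paper's printed $w_{2,3}$ contains a weight-breaking typo ($\lambda_8\psi_1$ should be $\lambda_4\psi_1$) which your recipe correctly fixes.
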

\textsc{The proof} is the result of direct computation.

\section{The problem of differentiation for the function ring}

Analogs of the vector fields $\{\mathcal{L}_{2k}\}$ for $g=1,2,3$ were introduced in \cite{FS, B2, B3} in connection with the
problem of differentiation of Abelian functions over parameters, see~\S \ref{S5}.
We formulate an analogy of this problem, which is solved by the fields $\{\mathcal{L}_{2k}\}$.

We introduce a ring of functions $\mathcal{R}_\varphi$. The generators of this graded ring over $\mathbb{Q}[\lambda]$ are the functions
$\psi_{k_1 \ldots k_n} = - \partial_{k_1} \cdots \partial_{k_n} \ln \varphi$,
where $n \geqslant 2$, $k_s \in \{ 1, 3, \ldots, 2 g - 1\}$, and~$\wt \psi_{k_1 \ldots k_n} = k_1 + \ldots + k_n$, $\wt \lambda_k = k$.

\begin{prb} \label{ppp}
Construct the ring of differential operators in $(z, \lambda)$ that are derivations of~$\mathcal{R}_\varphi$.
\end{prb}

\begin{thm} \label{thmL1}
For $g=1$ for the operators $\{\mathcal{L}_{2k}\}$ and $\{\partial_k\}$ the commutation relations hold
\begin{align*}
[\mathcal{L}_0, \partial_1] &= \partial_1; &
[\mathcal{L}_0, \mathcal{L}_2] &= 2 \mathcal{L}_2; &
[\partial_1, \mathcal{L}_2] &= - \psi_{11} \partial_1.
\end{align*}
\end{thm}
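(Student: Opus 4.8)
\textsc{Proof proposal.} The plan is to substitute the definitions $\mathcal{L}_0 = L_0 - z_1 \partial_1$ and $\mathcal{L}_2 = L_2 - \psi_1 \partial_1$ from the $g = 1$ list of differential operators and expand each bracket by bilinearity, so that every commutator reduces to an elementary one. The structural inputs I will use are that $L_0$ and $L_2$ are vector fields in the variables $\lambda$ alone, whence $[L_0, \partial_1] = [L_2, \partial_1] = 0$ and both commute with multiplication by $z_1$; that $[L_0, L_2] = 2 L_2$, which is the case $k=1$ of \eqref{L0}; and the two identities for the coefficient $\psi_1 = - \partial_1 \ln \varphi$, namely $\partial_1 \psi_1 = \psi_{11}$ (definitional) and $L_0 \psi_1 = z_1 \psi_{11} + \psi_1$.

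The first and third relations are immediate. For the first I would write $[\mathcal{L}_0, \partial_1] = [L_0, \partial_1] - [z_1 \partial_1, \partial_1]$; the first term vanishes and a one-line Leibniz computation gives $[z_1 \partial_1, \partial_1] = - \partial_1$, so $[\mathcal{L}_0, \partial_1] = \partial_1$. For the third I would write $[\partial_1, \mathcal{L}_2] = [\partial_1, L_2] - [\partial_1, \psi_1 \partial_1]$; here $[\partial_1, L_2] = 0$ and $[\partial_1, \psi_1 \partial_1] = (\partial_1 \psi_1) \partial_1 = \psi_{11} \partial_1$, giving $[\partial_1, \mathcal{L}_2] = - \psi_{11} \partial_1$.

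The second relation carries the actual content. Expanding,
\[
[\mathcal{L}_0, \mathcal{L}_2] = [L_0, L_2] - [L_0, \psi_1 \partial_1] - [z_1 \partial_1, L_2] + [z_1 \partial_1, \psi_1 \partial_1],
\]
where $[L_0, L_2] = 2 L_2$ and $[z_1 \partial_1, L_2] = 0$. I then compute $[L_0, \psi_1 \partial_1] = (L_0 \psi_1) \partial_1$ and $[z_1 \partial_1, \psi_1 \partial_1] = z_1 \psi_{11} \partial_1 - \psi_1 \partial_1$, the key simplification being that the second-order terms in $\partial_1^2$ cancel. Substituting $L_0 \psi_1 = z_1 \psi_{11} + \psi_1$ makes the two $z_1 \psi_{11} \partial_1$ contributions cancel and leaves $[\mathcal{L}_0, \mathcal{L}_2] = 2 L_2 - 2 \psi_1 \partial_1 = 2 \mathcal{L}_2$.

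The one step that is not pure operator bookkeeping, and hence the main obstacle, is the identity $L_0 \psi_1 = z_1 \psi_{11} + \psi_1$; this is the only place where the hypothesis that $\varphi$ solves the heat system enters. I would deduce it from $Q_0 \varphi = 0$: since $Q_0 = L_0 - z_1 \partial_1 + 1$ for $g = 1$, one gets $L_0 \ln \varphi = z_1 \partial_1 \ln \varphi - 1$, and applying $- \partial_1$, which commutes with $L_0$, yields $L_0 \psi_1 = z_1 \partial_1 \psi_1 + \psi_1 = z_1 \psi_{11} + \psi_1$. Equivalently this is precisely the first Burgers-analog equation $\mathcal{L}_0 \psi_1 = w_{0,1} = \psi_1$ of \eqref{Bu1}. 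With this identity in hand the three relations follow by the elementary computations above.
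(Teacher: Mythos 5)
Your proposal is correct and follows the same route as the paper, which simply asserts that the relations are ``obtained by direct calculation of the corresponding commutators'': you expand $[\mathcal{L}_0,\partial_1]$, $[\partial_1,\mathcal{L}_2]$ and $[\mathcal{L}_0,\mathcal{L}_2]$ from the definitions $\mathcal{L}_0=L_0-z_1\partial_1$, $\mathcal{L}_2=L_2-\psi_1\partial_1$ and use $[L_0,L_2]=2L_2$. A small added value of your write-up is that it makes explicit what the paper leaves implicit, namely that the middle relation is not a formal identity but uses $\mathcal{L}_0\psi_1=\psi_1$, i.e.\ the equation $Q_0\varphi=0$ (the first equation of the Burgers analogue \eqref{Bu1}), and you verify that step correctly.
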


\textsc{The proof} is obtained by direct calculation of the corresponding commutators.

\begin{thm} \label{thmL2}
For $g=2$ for the operators $\{\mathcal{L}_{2k}\}$ and $\{\partial_k\}$ the commutation relations hold
\begin{align*}
[\mathcal{L}_0, \mathcal{L}_{2k}] &= 2k \mathcal{L}_{2k}, \quad k=1,2,3; & 
[\mathcal{L}_0, \partial_{k}] &= k \partial_{k}, \quad k=1,3; & [\partial_1, \partial_3] &= 0;
\end{align*}
\begin{align*}
[\partial_1, \mathcal{L}_2] &= - \psi_{11} \partial_1 - \partial_3;
 &
[\partial_1, \mathcal{L}_4] &= -\psi_{13} \partial_1 - \psi_{11} \partial_3; \\
[\partial_1, \mathcal{L}_6] &= -\psi_{13} \partial_3; &
[\partial_3, \mathcal{L}_2] &=  - \left(\psi_{13} - {4 \over 5} \lambda_4 \right) \partial_1; \\
[\partial_3, \mathcal{L}_4] &= -\left(\psi_{33} - {6 \over 5} \lambda_6 \right) \partial_1 - \left(\psi_{13} + \lambda_4\right) \partial_3; &
[\partial_3, \mathcal{L}_6] &= {3 \over 5} \lambda_8 \partial_1 - \psi_{33} \partial_3;
\end{align*}
\begin{align*}
 \begin{pmatrix}
 [\mathcal{L}_2, \mathcal{L}_4] \\
 [\mathcal{L}_2, \mathcal{L}_6] \\
 [\mathcal{L}_4, \mathcal{L}_6] \\
 \end{pmatrix}
&=
\mathcal{M}
 \begin{pmatrix}
\mathcal{L}_0 \\
\mathcal{L}_2 \\
\mathcal{L}_4 \\
\mathcal{L}_6 \\
 \end{pmatrix} + {1 \over 2}
  \begin{pmatrix}
\psi_{113} & -\psi_{111}\\
\psi_{133} & -\psi_{113} \\
\psi_{333} & -\psi_{133}
 \end{pmatrix}
 \begin{pmatrix}
\partial_1 \\
\partial_3 \\
 \end{pmatrix}, \quad \text{where $\mathcal{M}$ is given in \eqref{Q2}}.
\end{align*}
\end{thm}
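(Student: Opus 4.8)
The plan is to treat each assertion as an identity of differential operators on the ring $\mathcal{R}_\varphi$ and to verify it by expanding both sides in the explicit generators, exactly as in the $g=1$ case of Theorem~\ref{thmL1}. Throughout I write $\mathcal{L}_{2k} = L_{2k} + \mathcal{D}_{2k}$, where $\mathcal{D}_{2k} = \sum_b A^{(k)}_b \partial_b$ is the first-order part read off from the formulas of \S\ref{S4}, whose coefficients $A^{(k)}_b$ are linear in $z_1,z_3$ and in $\psi_1,\psi_3$ over $\mathbb{Q}[\lambda]$. I record the three computational rules that drive the whole proof: (i) each $\partial_a$ commutes with each $L_{2k}$, since $L_{2k}$ is a vector field in $\lambda$ alone; (ii) $\partial_b \psi_{i_1\ldots i_n} = \psi_{i_1\ldots i_n b}$ and $\partial_a z_b = \delta_{ab}$, $\partial_a\lambda = 0$; and (iii) the action of $L_{2k}$ on the generators is governed by the heat equation \eqref{e3}: from $L_{2k}\varphi = H_{2k}\varphi$ one gets $L_{2k}\ln\varphi = H_{2k}\varphi/\varphi$, and applying $\partial_a$ (which commutes with $L_{2k}$) gives $L_{2k}\psi_a = \pm\,\partial_a\!\left(H_{2k}\varphi/\varphi\right)$. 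Expanding $H_{2k}\varphi/\varphi$ through the identities that relate $\partial_a\ln\varphi$ and $\partial_a\partial_b\ln\varphi$ to the generators $\psi_a,\psi_{ab}$ expresses $L_{2k}\psi_a$ as a third-order term, plus terms quadratic in the $\psi_\bullet$, plus terms linear in $z$ over $\mathbb{Q}[\lambda]$.

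Next come the grading relations of the first line together with the brackets $[\partial_a,\mathcal{L}_{2k}]$. The operator $\mathcal{L}_0 = L_0 - z_1\partial_1 - 3z_3\partial_3$ is the total Euler operator assigning $\wt z_a = -a$, $\wt\lambda_k = k$, hence $\wt\partial_a = a$; since each $\mathcal{L}_{2k}$ is homogeneous of weight $2k$ and each $\partial_k$ of weight $k$, the relations $[\mathcal{L}_0,\mathcal{L}_{2k}] = 2k\,\mathcal{L}_{2k}$ and $[\mathcal{L}_0,\partial_k] = k\,\partial_k$ are precisely the statement that $\mathcal{L}_0$ measures weight, which I would confirm by the one-line computation $[z_1\partial_1,\partial_1] = -\partial_1$ and its analogues; and $[\partial_1,\partial_3]=0$ is immediate. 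For the remaining brackets, rule (i) gives $[\partial_a,\mathcal{L}_{2k}] = [\partial_a,\mathcal{D}_{2k}] = \sum_b(\partial_a A^{(k)}_b)\partial_b$, so only the coefficients must be differentiated by rule (ii); e.g.\ $[\partial_1,\mathcal{L}_2] = \bigl(\partial_1(-\psi_1+\tfrac45\lambda_4 z_3)\bigr)\partial_1 + \bigl(\partial_1(-z_1)\bigr)\partial_3 = -\psi_{11}\partial_1 - \partial_3$, matching the claim, and the other five cases are identical in spirit.

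The main identity is that for $[\mathcal{L}_{2j},\mathcal{L}_{2k}]$, where the real work lies. Expanding and using (i) to drop the terms $A^{(k)}_b[L_{2j},\partial_b]$,
\[
[\mathcal{L}_{2j},\mathcal{L}_{2k}] = [L_{2j},L_{2k}] + \sum_b\bigl(L_{2j}A^{(k)}_b - L_{2k}A^{(j)}_b\bigr)\partial_b + [\mathcal{D}_{2j},\mathcal{D}_{2k}].
\]
Here $[L_{2j},L_{2k}]$ is supplied by Lemma~\ref{lg2} as the $\mathcal{M}$-combination of $L_0,L_2,L_4,L_6$; substituting $L_{2m} = \mathcal{L}_{2m} - \mathcal{D}_{2m}$ turns this into the desired $\mathcal{M}$-combination of $\mathcal{L}_0,\ldots,\mathcal{L}_6$ plus a first-order correction $-\,\mathcal{M}(\mathcal{D}_0,\mathcal{D}_2,\mathcal{D}_4,\mathcal{D}_6)^{\top}$. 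Since $[\mathcal{D}_{2j},\mathcal{D}_{2k}]$ is again first order (its second-order part cancels by symmetry of $\partial_a\partial_b$) and the coefficient-derivative term is visibly first order, the whole identity reduces to matching the coefficients of $\partial_1$ and $\partial_3$. The assertion to be checked is that in this reduction everything except $\tfrac12\psi_{\ldots}$ cancels: the $\lambda$-polynomial and $z$-linear pieces of $L_{2j}A^{(k)}_b$, of $[\mathcal{D}_{2j},\mathcal{D}_{2k}]$, and of $-\mathcal{M}(\mathcal{D}_0,\ldots)^{\top}$ reorganize to zero, while the quadratic terms $\psi_\bullet\psi_{\bullet\bullet}$ produced by rule (iii) are annihilated by the matching quadratic terms of $[\mathcal{D}_{2j},\mathcal{D}_{2k}]$, leaving only the third-derivative entries displayed in the statement.

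The main obstacle is precisely this last cancellation: marshalling the numerous $\lambda$-polynomial, $z$-linear and quadratic-$\psi$ contributions so that they vanish for each of the three commutators. Conceptually the cancellation is forced rather than accidental—the $\lambda$-polynomial part is inherited unchanged from the commutation relations of Lemma~\ref{lg2} (equivalently of the $Q_{2k}$ in Lemma~\ref{lemQ}, via Theorem~\ref{T32}), and the disappearance of the quadratic $\psi$-terms is the infinitesimal shadow of the consistency of the Cole--Hopf system \eqref{Bu2}—but establishing it rigorously is a bookkeeping computation that I would carry out term by term on the $\partial_1$- and $\partial_3$-coefficients, exactly the ``direct calculation'' recorded for the companion statements.
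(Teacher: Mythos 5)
Your proposal is correct and follows essentially the same route as the paper, which simply records that the proof is a direct calculation (``similar to the proof of Theorem B.6 from \cite{B2}''): the decomposition $\mathcal{L}_{2k}=L_{2k}+\mathcal{D}_{2k}$, the observation $[\partial_a,L_{2k}]=0$, the coefficient-differentiation argument for $[\partial_a,\mathcal{L}_{2k}]$ (which I checked against all six stated brackets), and the use of Lemma~\ref{lg2} plus the heat/Burgers system to supply $L_{2j}\psi_a$ are exactly the right ingredients. The only part you leave unexecuted is the final term-by-term cancellation for $[\mathcal{L}_{2j},\mathcal{L}_{2k}]$, which is precisely the bookkeeping the paper also omits.
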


\textsc{The proof} is similar to the proof of Theorem B.6 from \cite{B2}.

\begin{thm} \label{thmL3}
For $g=3$ for the operators $\{\mathcal{L}_{2k}\}$ and $\{\partial_k\}$ the commutation relations hold
\begin{align*}
[\mathcal{L}_0, \mathcal{L}_{2k}] &= 2k \mathcal{L}_{2k}, \quad k = 1,2,3,4,5; \\
[\mathcal{L}_0, \partial_k] &= k \partial_k, \quad k = 1,3,5;
\\ [\partial_k, \partial_s] &= 0, \quad k,s=1,3,5;
\\
  \begin{pmatrix}
 [\partial_1, \mathcal{L}_2] \\
 [\partial_1, \mathcal{L}_4] \\
 [\partial_1, \mathcal{L}_6] \\
 [\partial_1, \mathcal{L}_8] \\
 [\partial_1, \mathcal{L}_{10}] 
 \end{pmatrix}
&= 
 - \begin{pmatrix}
 \psi_{11} & 1 & 0\\
 \psi_{13} & \psi_{11} & 1 \\
 \psi_{15} & \psi_{13} & \psi_{11} \\
 0 & \psi_{15} & \psi_{13} \\
 0 & 0 & \psi_{15}
 \end{pmatrix}
 \begin{pmatrix}
  \partial_1 \\ \partial_3 \\ \partial_5
 \end{pmatrix};
\\
  \begin{pmatrix}
 [\partial_3, \mathcal{L}_2] \\
 [\partial_3, \mathcal{L}_4] \\
 [\partial_3, \mathcal{L}_6] \\
 [\partial_3, \mathcal{L}_8] \\
 [\partial_3, \mathcal{L}_{10}] 
 \end{pmatrix}
&= 
 - \begin{pmatrix}
 \psi_{13} + \lambda_4 & 0 & 3\\
 \psi_{33} & \psi_{13} + \lambda_4 & 0 \\
 \psi_{35} & \psi_{33} & \psi_{13} + \lambda_4 \\
 0 & \psi_{35} & \psi_{33}\\
 0 & 0 & \psi_{35}
 \end{pmatrix}
 \begin{pmatrix}
  \partial_1 \\ \partial_3\\ \partial_5
 \end{pmatrix} +
 {3 \over 7}
 \begin{pmatrix}
 5 \lambda_4 \\
 4 \lambda_6 \\
 3 \lambda_8 \\
 2 \lambda_{10} \\
 \lambda_{12}
 \end{pmatrix}
 \partial_1;
\\
  \begin{pmatrix}
 [\partial_5, \mathcal{L}_2] \\
 [\partial_5, \mathcal{L}_4] \\
 [\partial_5, \mathcal{L}_6] \\
 [\partial_5, \mathcal{L}_8] \\
 [\partial_5, \mathcal{L}_{10}] 
 \end{pmatrix}
&= 
 - \begin{pmatrix}
 \psi_{15} & 0 & 0\\
 \psi_{35} & \psi_{15} & 0 \\
 \psi_{55} & \psi_{35} &  \psi_{15} \\
 \lambda_{12} & \psi_{55} & \psi_{35} \\
 2 \lambda_{14} & \lambda_{12} &  \psi_{55}
 \end{pmatrix}
 \begin{pmatrix}
  \partial_1 \\ \partial_3\\ \partial_5
 \end{pmatrix} +
 {2 \over 7}
 \begin{pmatrix}
 2 \lambda_{4} \\
 3 \lambda_6 \\
 4 \lambda_8 \\
 5 \lambda_{10} \\
 6 \lambda_{12} 
 \end{pmatrix}
 \partial_3
 - \begin{pmatrix}
 0 \\
 3 \lambda_4 \\
 2 \lambda_6 \\
 \lambda_8 \\
 0
 \end{pmatrix}
 \partial_5;
\\
 \begin{pmatrix}
 [\mathcal{L}_2, \mathcal{L}_4] \\
 [\mathcal{L}_2, \mathcal{L}_6] \\
 [\mathcal{L}_2, \mathcal{L}_8] \\
 [\mathcal{L}_2, \mathcal{L}_{10}] \\
 [\mathcal{L}_4, \mathcal{L}_6] \\
 [\mathcal{L}_4, \mathcal{L}_8] \\
 [\mathcal{L}_4, \mathcal{L}_{10}] \\
 [\mathcal{L}_6, \mathcal{L}_8] \\
 [\mathcal{L}_6, \mathcal{L}_{10}] \\
 [\mathcal{L}_8, \mathcal{L}_{10}] 
 \end{pmatrix}
&=
\mathcal{M}
 \begin{pmatrix}
 \mathcal{L}_0 \\
 \mathcal{L}_2 \\
 \mathcal{L}_4 \\
 \mathcal{L}_6 \\
 \mathcal{L}_8 \\
 \mathcal{L}_{10}
 \end{pmatrix}
+ {1 \over 2}
 \begin{pmatrix}
 \psi_{113} & - \psi_{111} & 0 \\
 \psi_{133} + \psi_{115} & - \psi_{113} & - \psi_{111} \\
 2 \psi_{135} & - \psi_{115} & - \psi_{113} \\
 \psi_{155} & 0 & - \psi_{115} \\
 \psi_{333} & - \psi_{133} + 2  \psi_{115} & - 2 \psi_{113} \\
 2 \psi_{335} & 0 & - 2 \psi_{133} \\
 \psi_{355} & \psi_{155} & - 2  \psi_{135} \\
2 \psi_{355} & - 2 \psi_{155} + \psi_{335} & - \psi_{333} \\
 \psi_{555} & \psi_{355} & - \psi_{335} - \psi_{155} \\
0 & \psi_{555} & - \psi_{355} 
 \end{pmatrix}
 \begin{pmatrix}
 \partial_1 \\
 \partial_3 \\
 \partial_5 \\
 \end{pmatrix},
\end{align*}
where $\mathcal{M}$ is given in \eqref{Q3}. 
\end{thm}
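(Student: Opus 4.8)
The plan is to treat every identity as a relation in the Lie algebra of derivations of $\mathcal{R}_\varphi$ (extended to act on $z$ and $\lambda$), and to exploit throughout the decomposition $\mathcal{L}_{2k} = L_{2k} + X_{2k}$, where $X_{2k}$ is the first-order operator in $z$ whose coefficients are the linear-in-$(z,\psi_1,\psi_3,\psi_5)$ expressions read off from the explicit formulas for $\mathcal{L}_{2k}$ in \S\ref{S4}. Three elementary facts drive the whole computation. First, $[\partial_j, L_{2k}] = 0$, since by \eqref{Lk} the field $L_{2k}$ involves only $\lambda$-derivatives with coefficients in $\mathbb{Q}[\lambda]$. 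Second, $\partial_j z_a = \delta_{ja}$ and $\partial_j \psi_{k_1 \ldots k_n} = \psi_{j k_1 \ldots k_n}$ by the definition of the generators. Third, the action of $L_{2k}$ on the second-order generators is pinned down on a solution $\varphi$ by the analog-of-Burgers relations of \S\ref{S4}, namely $L_{2k}\psi_s = w_{2k,s} - X_{2k}\psi_s$, together with its $z$-derivatives; note that each $w_{2k,s}$ already carries the third-order generators $\tfrac12 \psi_{\cdot\cdot\cdot}$ that will reappear in the final answer.

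First I would dispose of the grading block. By the weight assignments $\wt z_k = -k$, $\wt \lambda_k = k$, $\wt \partial_k = k$, $\wt \psi_{k_1 \ldots k_n} = k_1 + \cdots + k_n$ one checks that $\mathcal{L}_0 = L_0 - z_1 \partial_1 - 3 z_3 \partial_3 - 5 z_5 \partial_5$ is exactly the weight-counting (Euler) derivation, so $[\mathcal{L}_0, \mathcal{L}_{2k}] = 2k \mathcal{L}_{2k}$ and $[\mathcal{L}_0, \partial_k] = k \partial_k$ are immediate, while $[\partial_k, \partial_s] = 0$ is the commutativity of mixed partials; this is as in Theorems \ref{thmL1} and \ref{thmL2}. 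For the three blocks $[\partial_j, \mathcal{L}_{2k}]$ I would write $[\partial_j, \mathcal{L}_{2k}] = [\partial_j, X_{2k}] = \sum_a (\partial_j A^{(2k)}_a)\, \partial_a$, using $[\partial_j, L_{2k}] = 0$ and $[\partial_j, \partial_a] = 0$. Since each coefficient $A^{(2k)}_a$ is linear in $z$ and in $\psi_1, \psi_3, \psi_5$ over $\mathbb{Q}[\lambda]$, the derivative $\partial_j A^{(2k)}_a$ yields only the $\lambda$-constants coming from the $z$-terms and the generators $\psi_{j1}, \psi_{j3}, \psi_{j5}$ coming from the $\psi$-terms; reading these off from the explicit $\mathcal{L}_{2k}$ reproduces the three displayed $5 \times 3$ matrices. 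This step is pure bookkeeping.

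The substantial step is the final ten-row block for $[\mathcal{L}_{2i}, \mathcal{L}_{2j}]$. I would expand
\[
[\mathcal{L}_{2i}, \mathcal{L}_{2j}] = [L_{2i}, L_{2j}] + \bigl([L_{2i}, X_{2j}] - [L_{2j}, X_{2i}]\bigr) + [X_{2i}, X_{2j}].
\]
The first summand is supplied by Lemma \ref{lg3}, i.e. by $\mathcal{M}$ from \eqref{M3} applied to $(L_0, L_2, \ldots, L_{10})^\top$. To match the statement, in which $\mathcal{M}$ multiplies $(\mathcal{L}_0, \ldots, \mathcal{L}_{10})^\top$, I rewrite $\mathcal{M}(L_\bullet) = \mathcal{M}(\mathcal{L}_\bullet) - \mathcal{M}(X_\bullet)$, where $\mathcal{M}(X_\bullet)$ is a purely first-order operator in $z$. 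The remaining two summands are also first-order in $z$: the middle one because $[L_{2i}, A^{(2j)}_a \partial_a] = (L_{2i} A^{(2j)}_a)\, \partial_a$ and $L_{2i} A^{(2j)}_a$ is evaluated through $L_{2i}\psi_s = w_{2i,s} - X_{2i}\psi_s$ and $L_{2i}\lambda_q = T_{2i+2, 2q-2}$ of Lemma \ref{lem1}; the last one because a commutator of two first-order operators is again first order. The claim to establish is therefore
\[
[L_{2i}, X_{2j}] - [L_{2j}, X_{2i}] + [X_{2i}, X_{2j}] - \mathcal{M}(X_\bullet) = \tfrac12 \,(\text{displayed }\psi\text{-matrix})\,(\partial_1, \partial_3, \partial_5)^\top,
\]
after using the full symmetry of the third-order generators $\psi_{abc}$ in their indices.

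The main obstacle is precisely this last reduction: a large but finite, weight-homogeneous computation. I would organize it by weight, noting that each $[\mathcal{L}_{2i}, \mathcal{L}_{2j}]$ has weight $2i + 2j$, which sharply restricts the admissible monomials in each row. Within each row one must (i) feed in the analog-of-Burgers relations to evaluate every $L_{2k}\psi_s$ and its first $z$-derivatives, which is where the third-order coefficients $\psi_{abc}$ enter, via the $\tfrac12 \psi_{\cdot\cdot\cdot}$ terms in $w_{2k,s}$; (ii) verify that the $\lambda$-only, $z$-only, and $\psi_s$-linear contributions combine with $\mathcal{M}(X_\bullet)$ and cancel; and (iii) verify that the quadratic-in-$\psi$ contributions arising from $X_{2i}\psi_s$ and from $[X_{2i}, X_{2j}]$ cancel among themselves, leaving exactly the coefficients $\psi_{abc}$ displayed. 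The closure of the non-$\psi_{abc}$ parts is structurally forced: by Theorem \ref{T32} the operators $Q_{2k}$ satisfy the same relations as the $L_{2k}$, and the analog-of-Burgers system is the Cole--Hopf image of $Q_{2k}\varphi = 0$, so the truncations are compatible. Nonetheless the matching of the $\psi_{abc}$-coefficients must be carried out termwise, exactly as in the proof of Theorem B.6 of \cite{B2} cited for the $g=2$ analogue in Theorem \ref{thmL2}.
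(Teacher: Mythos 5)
Your proposal is correct and follows essentially the same route as the paper, which simply verifies these identities by direct calculation (citing the analogous computation in Corollary~10.2 of \cite{B3}): you decompose $\mathcal{L}_{2k}=L_{2k}+X_{2k}$, reduce everything to $[\partial_j,L_{2k}]=0$, the Burgers relations $\mathcal{L}_{2k}\psi_s=w_{2k,s}$ (which is where the implicit hypothesis $Q_{2k}\varphi=0$ enters, exactly as the paper intends), and Lemmas~\ref{lem1} and~\ref{lg3}, and then match coefficients row by row. The only caveat is that your appeal to Theorem~\ref{T32} for the ``structurally forced'' closure is heuristic rather than a proof, but since you state that the termwise verification must still be carried out, this does not create a gap.
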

\textsc{The proof} is similar to the proof of Corollary 10.2 from \cite{B3}.

\begin{cor} \label{L5}
The operators $\{\mathcal{L}_{2k}\}$ and $\{\partial_k\}$ for $g = 1,2,3$ give a framing of a Lie--Rinehart algebra, where $A = \mathcal{R}_\varphi$.
\end{cor}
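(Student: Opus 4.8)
The plan is to recognize the statement as a repackaging of Theorems \ref{thmL1}, \ref{thmL2} and \ref{thmL3} into the language of framings, using the general principle that for a commutative ring $A$ the pair $(A, \mathrm{Der}(A))$ is a Lie--Rinehart algebra over the ground ring $R$, and that any $A$--submodule of $\mathrm{Der}(A)$ closed under the commutator bracket is itself such an algebra. Accordingly, I would reduce the proof to three points: (i) every $\mathcal{L}_{2k}$ and every $\partial_k$ restricts to a derivation of $A = \mathcal{R}_\varphi$; (ii) the $3g$ operators $\{\mathcal{L}_{2k}\}\cup\{\partial_k\}$ are a homogeneous basis of a free $\mathcal{R}_\varphi$--module of rank $3g$; and (iii) this module is closed under commutators with structure elements in $\mathcal{R}_\varphi$. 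Point (iii) is exactly the content of Theorems \ref{thmL1}--\ref{thmL3}, in which every bracket $[\mathcal{L}_0,\partial_k]$, $[\partial_k,\partial_s]$, $[\partial_k,\mathcal{L}_{2m}]$, $[\mathcal{L}_{2k},\mathcal{L}_{2m}]$ is written as an $\mathcal{R}_\varphi$--linear combination of the framing operators. The Lie--Rinehart compatibility identities $[X,aY]=X(a)Y+a[X,Y]$ and $(aX)(b)=a(X(b))$ then hold automatically, since each $X$ acts as a derivation and the module structure is multiplication by elements of $\mathcal{R}_\varphi$.

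For point (ii) I would observe that each $\mathcal{L}_{2k}=L_{2k}+(\ldots)$ and each $\partial_k$ is a first-order operator on $\Phi$, hence a derivation of $\Phi$; the $\partial_k$ involve only $z$--derivatives, whereas each $\mathcal{L}_{2k}$ carries the nonzero $\lambda$--derivation $L_{2k}$. Since the matrix $T$ underlying the $L_{2k}$ is nondegenerate over $\mathbb{Q}[\lambda]$ (Lemma \ref{T31}), the $L_{2k}$ remain independent over $\mathcal{R}_\varphi$, so no nontrivial $\mathcal{R}_\varphi$--combination of the $\mathcal{L}_{2k}$ can equal a combination of the $\partial_l$, and all $3g$ operators are $\mathcal{R}_\varphi$--independent. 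Homogeneity with respect to $\wt z_k=-k$, $\wt\lambda_k=k$, $\wt\psi_{k_1\ldots k_n}=k_1+\cdots+k_n$ is read off from $\wt\mathcal{L}_{2k}=2k$ and $\wt\partial_k=k$, consistent with the grading of $\mathcal{R}_\varphi$.

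The hard part will be point (i): because the coefficients of $\mathcal{L}_{2k}$ contain the order-one logarithmic derivatives $\psi_j$ and the variables $z_j$, which are \emph{not} in $\mathcal{R}_\varphi$, preservation of $\mathcal{R}_\varphi$ is not obvious. I would prove it by induction on the order $n\geqslant 2$ of a generator $\psi_{k_1\ldots k_n}$. On $\mathbb{Q}[\lambda]$ there is nothing to check, as $\mathcal{L}_{2k}\lambda_{2s}=L_{2k}\lambda_{2s}=T_{2k+2,2s-2}\in\mathbb{Q}[\lambda]$ and $\partial_k\lambda_{2s}=0$; for $\partial_k$ the claim is immediate since $\partial_k\psi_{k_1\ldots k_n}=\psi_{k k_1\ldots k_n}$ is again a generator. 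For $\mathcal{L}_{2k}$ the base case is order two: writing $\psi_{ab}=\partial_a\psi_b$ and using the Burgers-type relations $\mathcal{L}_{2k}\psi_b=w_{2k,b}$ of Section \ref{S4} together with the commutators of Theorems \ref{thmL1}--\ref{thmL3},
\[
\mathcal{L}_{2k}\psi_{ab}=\mathcal{L}_{2k}\partial_a\psi_b=\partial_a w_{2k,b}-[\partial_a,\mathcal{L}_{2k}]\psi_b .
\]
Here $\partial_a w_{2k,b}\in\mathcal{R}_\varphi$, because $\partial_a$ sends the $z$--linear terms of $w_{2k,b}$ into $\mathbb{Q}[\lambda]$, the order-one terms $\psi_s$ into order-two generators $\psi_{as}$, and the order-three terms $\psi_{cde}$ into order-four generators; and $[\partial_a,\mathcal{L}_{2k}]\psi_b=\sum_l f_l\psi_{lb}$ with all $f_l\in\mathcal{R}_\varphi$. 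For the inductive step, a generator of order $n+1$ is $\partial_j\psi_{k_1\ldots k_n}$, and
\[
\mathcal{L}_{2k}\,\partial_j\psi_{k_1\ldots k_n}=\partial_j\bigl(\mathcal{L}_{2k}\psi_{k_1\ldots k_n}\bigr)-[\partial_j,\mathcal{L}_{2k}]\psi_{k_1\ldots k_n},
\]
whose first term lies in $\mathcal{R}_\varphi$ by the inductive hypothesis and the fact that $\partial_j$ preserves $\mathcal{R}_\varphi$, while the second equals $\sum_l f_l\psi_{l k_1\ldots k_n}$ with $f_l\in\mathcal{R}_\varphi$. This closes the induction, giving $\mathcal{L}_{2k}(\mathcal{R}_\varphi)\subseteq\mathcal{R}_\varphi$. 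Assembling (i)--(iii) yields the framing of the Lie--Rinehart algebra with $A=\mathcal{R}_\varphi$ and solves Problem \ref{ppp}.
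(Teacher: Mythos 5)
Your argument is correct, but it packages considerably more into this corollary than the paper does. The paper offers no separate proof of Corollary~\ref{L5}: it is meant to follow immediately from the commutation relations of Theorems~\ref{thmL1}--\ref{thmL3}, i.e.\ from your point~(iii) alone --- every bracket among $\{\mathcal{L}_{2k}\}\cup\{\partial_k\}$ is an $\mathcal{R}_\varphi$-linear combination of the same operators, so these $3g$ homogeneous elements frame a Lie--Rinehart algebra over $A=\mathcal{R}_\varphi$. The question of whether the operators actually act on $\mathcal{R}_\varphi$ by derivations --- your point~(i) --- is deliberately deferred to Theorem~\ref{Tdif}, and for a reason your write-up should acknowledge: your induction rests on the relations $\mathcal{L}_{2k}\psi_b=w_{2k,b}$, which hold only when $\varphi$ solves the system of heat equations~\eqref{e3}; Theorem~\ref{Tdif} is exactly the statement carrying that hypothesis, and its terse proof (apply $\partial_k$ to the operators of \S\ref{S4}, use Corollary~\ref{seec} and $\partial_k w_{2s,l}\in\mathcal{R}_\varphi$) is your induction in compressed form. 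So what you have written is a valid and usefully explicit proof of Corollary~\ref{L5} \emph{together with} Theorem~\ref{Tdif}; if you want to prove only the corollary in the paper's intended scope, either confine yourself to the formal framing determined by the structure relations of Theorems~\ref{thmL1}--\ref{thmL3}, or state the heat-equation hypothesis explicitly, since without it step~(i) fails. Your independence argument~(ii) via the nondegeneracy of the matrix $T$ over $\mathbb{Q}[\lambda]$ is a reasonable supplement that the paper leaves implicit.
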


We denote by $\mathscr{L}_\mathcal{L}$ the Lie--Rinehart algebra from  Corollary \ref{L5}.

Consider the Lie--Rinehart algebra $(\mathcal{R}_\varphi, \mathscr{L}_{L}\otimes_{\mathbb{Q}[\lambda]} \mathcal{R}_\varphi)$. Directly from commutation relations of Theorems~\ref{thmL1}, \ref{thmL2}, and \ref{thmL3}, we obtain:

\begin{thm}
The map $\mathcal{L}_{2k} \to L_{2k}$, $\partial_k \to 0$ defines a homomorphism of Lie--Rinehart algebras $\mathscr{L}_{L}\otimes_{\mathbb{Q}[\lambda]} \mathcal{R}_\varphi$ and $\mathscr{L}_\mathcal{L}$ that is identity on $\mathbb{Q}[\lambda]$.
\end{thm}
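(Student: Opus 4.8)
The plan is to realize $\phi$ as the projection of $\mathscr{L}_\mathcal{L}$ onto the quotient by the ``vertical'' part spanned by the $z$-derivatives, and to identify that quotient with the base change $\mathscr{L}_L \otimes_{\mathbb{Q}[\lambda]} \mathcal{R}_\varphi$. Concretely, I would extend $\phi$ to all of $\mathscr{L}_\mathcal{L}$ by $\mathcal{R}_\varphi$-linearity, so that $\phi\bigl(\sum_k a_k \mathcal{L}_{2k} + \sum_j b_j \partial_j\bigr) = \sum_k a_k L_{2k}$ for $a_k, b_j \in \mathcal{R}_\varphi$. Its kernel is the free $\mathcal{R}_\varphi$-submodule $D = \langle \partial_1, \partial_3, \ldots, \partial_{2g-1}\rangle_{\mathcal{R}_\varphi}$, and the first task is to check that $D$ is a Lie--Rinehart ideal.

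First I would establish that $D$ is closed under the bracket with the whole framing. From Theorems \ref{thmL1}, \ref{thmL2} and \ref{thmL3} one reads off $[\partial_i, \partial_j] = 0$ and that each $[\partial_i, \mathcal{L}_{2j}]$ is an $\mathcal{R}_\varphi$-linear combination of the $\partial_m$; combined with the compatibility rule $[\mathcal{L}_{2j}, a\partial_i] = \mathcal{L}_{2j}(a)\partial_i + a[\mathcal{L}_{2j}, \partial_i]$ this yields $[\mathscr{L}_\mathcal{L}, D] \subseteq D$, so $D$ is simultaneously an $\mathcal{R}_\varphi$-submodule and a Lie ideal. Next I would verify that $\phi$ respects brackets by testing on generators: each relation has the form $[\mathcal{L}_{2i}, \mathcal{L}_{2j}] = \sum_k c_{ij}^k \mathcal{L}_{2k} + (\text{$\mathcal{R}_\varphi$-combination of the } \partial_m)$, where the structure coefficients $c_{ij}^k \in \mathbb{Q}[\lambda]$ are exactly the entries of the matrices $\mathcal{M}$ from \eqref{M2}, \eqref{M3} together with \eqref{L0}. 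Applying $\phi$ annihilates every $\partial_m$ and leaves $[L_{2i}, L_{2j}] = \sum_k c_{ij}^k L_{2k}$, which is precisely the bracket in $\mathscr{L}_L \otimes_{\mathbb{Q}[\lambda]} \mathcal{R}_\varphi$. Thus $\phi$ descends to an isomorphism $\mathscr{L}_\mathcal{L}/D \xrightarrow{\sim} \mathscr{L}_L \otimes_{\mathbb{Q}[\lambda]} \mathcal{R}_\varphi$ of free $\mathcal{R}_\varphi$-modules preserving the Lie bracket.

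Finally, for the Lie--Rinehart structure I would note that the $z$-derivatives $\partial_m$ annihilate $\mathbb{Q}[\lambda]$, so on $\mathbb{Q}[\lambda]$ the anchors of $\mathcal{L}_{2k}$ and of its image $L_{2k}$ both coincide with the vector field $L_{2k}$ of \S\ref{S1}, while $\phi(\partial_m) = 0$ likewise acts as zero there; this is the assertion that $\phi$ is the identity on $\mathbb{Q}[\lambda]$. The only genuinely laborious step is the bookkeeping in the ten-bracket block of Theorem \ref{thmL3} for $g = 3$: there I must confirm that the matrix multiplying $(\mathcal{L}_0, \ldots, \mathcal{L}_{10})^\top$ is verbatim the $\mathcal{M}$ of \eqref{M3} (equivalently \eqref{Q3}), and that the accompanying $\tfrac12$-weighted block of third logarithmic derivatives multiplies only $(\partial_1, \partial_3, \partial_5)^\top$ and hence lies in $D$. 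Since these are finite identities already displayed in the statement of the theorem, no conceptual obstacle remains; the essential content of the result is simply that the $\partial$-part forms the ideal $D$ and that $\mathscr{L}_L \otimes_{\mathbb{Q}[\lambda]} \mathcal{R}_\varphi$ is the resulting quotient.
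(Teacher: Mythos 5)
Your proposal is correct and is essentially the paper's argument spelled out: the paper derives the theorem ``directly from the commutation relations of Theorems~\ref{thmL1}, \ref{thmL2}, and \ref{thmL3}'', which is exactly the verification you perform (the $\partial$-span is an ideal killed by the map, the $\mathcal{M}$-blocks match the structure constants of $\mathscr{L}_L$, and the anchors agree on $\mathbb{Q}[\lambda]$ because the $\partial_m$ annihilate it). No discrepancy with the paper's route; you merely make explicit the quotient-by-$D$ picture that the paper leaves implicit.
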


\begin{cor} \label{seec}
The commutators
$
[\partial_k, \mathcal{L}_{2s}]
$
are decomposed into a linear combination of~operators $\{\partial_k\}$ with coefficients in $\mathcal{R}_\varphi$.
\end{cor}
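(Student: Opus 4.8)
The plan is to combine the framing of Corollary~\ref{L5} with the homomorphism of the preceding theorem. By Corollary~\ref{L5} the operators $\{\mathcal{L}_{2k}\}$ together with $\{\partial_k\}$ form a framing of the Lie--Rinehart algebra $\mathscr{L}_\mathcal{L}$ over $A = \mathcal{R}_\varphi$; in particular $\mathscr{L}_\mathcal{L}$ is a free $\mathcal{R}_\varphi$--module on $\{\mathcal{L}_{2k}\} \cup \{\partial_k\}$, and hence each commutator expands uniquely as
\[
 [\partial_k, \mathcal{L}_{2s}] = \sum_j a_j \mathcal{L}_{2j} + \sum_l b_l \partial_l, \qquad a_j, b_l \in \mathcal{R}_\varphi .
\]
The corollary asserts precisely that $a_j = 0$ for all $j$, so the whole matter is to isolate and annihilate the coefficients of the $\mathcal{L}_{2j}$.

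First I would apply the homomorphism $\rho\colon \mathscr{L}_\mathcal{L} \to \mathscr{L}_L \otimes_{\mathbb{Q}[\lambda]} \mathcal{R}_\varphi$ of the preceding theorem, with base-ring component $\phi$, which sends $\mathcal{L}_{2k} \mapsto L_{2k}$ and $\partial_k \mapsto 0$. Because $\rho$ is a homomorphism of Lie algebras and $\rho(\partial_k) = 0$,
\[
 \rho\big([\partial_k, \mathcal{L}_{2s}]\big) = [\rho(\partial_k), \rho(\mathcal{L}_{2s})] = [0, L_{2s}] = 0 .
\]
Evaluating $\rho$ on the right-hand side of the expansion via the module compatibility $\rho(a X) = \phi(a)\,\rho(X)$ yields $\sum_j \phi(a_j) L_{2j} = 0$. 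By Lemma~\ref{T31} the module $\mathscr{L}_L$ is free over $\mathbb{Q}[\lambda]$ with basis $\{L_{2j}\}$, so after extension of scalars $\mathscr{L}_L \otimes_{\mathbb{Q}[\lambda]} \mathcal{R}_\varphi$ is free over $\mathcal{R}_\varphi$ on the same generators; therefore $\phi(a_j) = 0$ for every $j$, which leaves $[\partial_k, \mathcal{L}_{2s}] = \sum_l b_l \partial_l$.

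The step requiring care is the passage from $\phi(a_j) = 0$ to $a_j = 0$: it needs the base-ring component $\phi$ to be injective — indeed the identity — on all of $\mathcal{R}_\varphi$, and not merely on $\mathbb{Q}[\lambda]$ as literally stated. I expect this to be the main, though mild, obstacle, and it should follow from the construction of $\rho$, whose base ring is $\mathcal{R}_\varphi$ on both source and target. As an independent verification, the claim can also be read directly off the explicit commutator tables in Theorems~\ref{thmL1}, \ref{thmL2}, and~\ref{thmL3}: in each case $[\partial_k, \mathcal{L}_{2s}]$ is displayed as an $\mathcal{R}_\varphi$--linear combination of $\partial_1, \partial_3, \partial_5$ alone, with no $\mathcal{L}_{2j}$ appearing, which is exactly the asserted decomposition.
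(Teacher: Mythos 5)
Your proposal is correct, but your primary argument takes a more abstract route than the paper, which offers no separate proof at all: the corollary is stated as an immediate consequence of the explicit commutator tables in Theorems~\ref{thmL1}, \ref{thmL2}, and~\ref{thmL3}, where each $[\partial_k, \mathcal{L}_{2s}]$ is already displayed as an $\mathcal{R}_\varphi$--linear combination of the $\partial_l$ with no $\mathcal{L}_{2j}$ term. That direct inspection is exactly what you relegate to an ``independent verification'' at the end, and it is the cleanest and the intended justification. Your main argument --- expand $[\partial_k, \mathcal{L}_{2s}]$ in the free $\mathcal{R}_\varphi$--basis $\{\mathcal{L}_{2k}\}\cup\{\partial_k\}$ from Corollary~\ref{L5}, push forward by the homomorphism $\mathcal{L}_{2k}\mapsto L_{2k}$, $\partial_k\mapsto 0$, and use freeness of $\mathscr{L}_L\otimes_{\mathbb{Q}[\lambda]}\mathcal{R}_\varphi$ over $\mathcal{R}_\varphi$ to kill the coefficients $a_j$ --- is formally valid (and your worry about the base-ring component is harmless: the map is the identity on $\mathcal{R}_\varphi$, the common base ring of source and target). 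Its only real drawback is a mild circularity: in the paper the homomorphism theorem is itself ``obtained directly from the commutation relations,'' and its well-definedness already presupposes that $[\partial_k,\mathcal{L}_{2s}]$ has no $\mathcal{L}$--component (otherwise sending $\partial_k$ to $0$ would not respect the bracket). So your deduction buys a tidy structural packaging, but it does not furnish evidence independent of the tables; the paper's route is the more elementary and logically primary one.
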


\begin{cor}
The commutators
$[\mathcal{L}_{2k}, \mathcal{L}_{2s}]$ and $[\partial_{k}, \mathcal{L}_{2s}]$
for all $(k,s)$
can be expressed as linear combinations of~operators $\{\mathcal{L}_{2k}\}$ and $\{\partial_k\}$ with coefficients in $\mathcal{R}_\varphi$.
\end{cor}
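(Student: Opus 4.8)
The plan is to treat the two families of commutators separately and then assemble them. For the brackets $[\partial_k, \mathcal{L}_{2s}]$ there is nothing new to prove: this is precisely the content of Corollary~\ref{seec}, which already expresses each such commutator as a linear combination of the $\{\partial_k\}$ with coefficients in $\mathcal{R}_\varphi$. Since $\{\partial_k\}$ is part of the framing $\{\mathcal{L}_{2k}\} \cup \{\partial_k\}$, this is a special case of the asserted form. Hence the only remaining work concerns the brackets $[\mathcal{L}_{2k}, \mathcal{L}_{2s}]$.

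For these I would read off the relations among the $\mathcal{L}_{2j}$ displayed in Theorems~\ref{thmL1}, \ref{thmL2} and~\ref{thmL3}. The relations $[\mathcal{L}_0, \mathcal{L}_{2k}] = 2k\,\mathcal{L}_{2k}$ are visibly of the required form, while each of the remaining brackets has the shape (matrix $\mathcal{M}$) applied to the column of the $\mathcal{L}_{2j}$, plus (a matrix of logarithmic derivatives $\psi_{\ldots}$) applied to the column of the $\partial_j$. The entries of $\mathcal{M}$ lie in $\mathbb{Q}[\lambda] \subset \mathcal{R}_\varphi$ (see \eqref{Q2}, \eqref{Q3}), while every logarithmic derivative appearing in the $\partial$-block carries at least two indices, and such $\psi_{k_1 \ldots k_n}$ with $n \geqslant 2$ are by definition generators of $\mathcal{R}_\varphi$. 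Consequently all coefficients lie in $\mathcal{R}_\varphi$, and each $[\mathcal{L}_{2k}, \mathcal{L}_{2s}]$ is an $\mathcal{R}_\varphi$-linear combination of $\{\mathcal{L}_{2j}\}$ and $\{\partial_j\}$, as claimed.

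A more structural way to phrase the same conclusion is to invoke Corollary~\ref{L5} directly: a framing exhibits $\mathscr{L}_\mathcal{L}$ as a free left $\mathcal{R}_\varphi$-module with basis $\{\mathcal{L}_{2k}\} \cup \{\partial_k\}$, and since $\mathscr{L}_\mathcal{L}$ is a Lie algebra the bracket of any two of its elements again lies in $\mathscr{L}_\mathcal{L}$; expanding such a bracket in the basis yields exactly an $\mathcal{R}_\varphi$-linear combination of the framing elements. From this viewpoint the corollary is a formal consequence of the framing property already established in Corollary~\ref{L5}, and the explicit formulas of Theorems~\ref{thmL1}--\ref{thmL3} serve only to exhibit the structure polynomials.

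I expect the only genuine point to verify---and it is in fact verified inside Theorems~\ref{thmL1}--\ref{thmL3}---to be the closure of the framing under bracketing. A priori the commutator of two of the vector fields $\mathcal{L}_{2k}$ could produce coefficients involving the first-order logarithmic derivatives $\psi_1, \psi_3, \psi_5$, which do \emph{not} belong to $\mathcal{R}_\varphi$. The crux is that these order-one derivatives enter each commutator only through the vector fields $\mathcal{L}_{2j}$ themselves, whose coefficients already contain the $\psi_j$, so that after collecting terms every coefficient standing freely in front of a framing element has order at least $2$ and hence lies in $\mathcal{R}_\varphi$. Once this cancellation of the order-one part is in place, reading off the formulas completes the proof.
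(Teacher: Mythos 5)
Your argument is correct and is essentially the paper's own: the corollary is stated there as a direct consequence of the explicit commutation relations in Theorems~\ref{thmL1}--\ref{thmL3} (together with Corollary~\ref{seec} for the $[\partial_k,\mathcal{L}_{2s}]$ part), exactly as you read them off, with the key observation being the one you make---that only logarithmic derivatives of order at least $2$, hence elements of $\mathcal{R}_\varphi$, appear as coefficients. No further commentary is needed.
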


\begin{thm} \label{Tdif}
If the function $\varphi$ satisfies the system of heat equations in a nonholonomic frame for genus $g$, then the algebra $\mathscr{L}_\mathcal{L}$ is an algebra of derivations of the ring $\mathcal{R}_\varphi$.
\end{thm}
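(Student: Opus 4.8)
The plan is to reduce the assertion to the single statement that each framing operator $\partial_k$ and $\mathcal{L}_{2k}$ maps $\mathcal{R}_\varphi$ into itself. Indeed, a general element of $\mathscr{L}_\mathcal{L}$ is an $\mathcal{R}_\varphi$-linear combination $\sum_k p_k\mathcal{L}_{2k}+\sum_j q_j\partial_j$ with $p_k,q_j\in\mathcal{R}_\varphi$; each summand is a first-order differential operator, hence obeys the Leibniz rule, and multiplication by an element of $\mathcal{R}_\varphi$ preserves $\mathcal{R}_\varphi$. Since $\partial_k$ and $\mathcal{L}_{2k}$ are derivations of the ambient function ring and $\mathcal{R}_\varphi$ is generated over $\mathbb{Q}[\lambda]$ by the functions $\psi_{k_1\cdots k_n}$ with $n\geqslant 2$, it is enough to verify that these generators, and the generators $\lambda_{2s}$ of $\mathbb{Q}[\lambda]$, are sent into $\mathcal{R}_\varphi$.

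The action of $\partial_k$ is immediate and holds for any smooth $\varphi$: $\partial_k\lambda_{2s}=0$ and $\partial_k\psi_{k_1\cdots k_n}=\psi_{kk_1\cdots k_n}$, again a generator. For $\mathcal{L}_{2k}$ one first observes that $\mathcal{L}_{2k}\lambda_{2s}=L_{2k}\lambda_{2s}=T_{2k+2,2s-2}\in\mathbb{Q}[\lambda]\subset\mathcal{R}_\varphi$, the $z$-dependent corrections in $\mathcal{L}_{2k}$ annihilating $\lambda$. The essential point is then $\mathcal{L}_{2k}\psi_{k_1\cdots k_n}\in\mathcal{R}_\varphi$ for all $n\geqslant 2$, and it is exactly here that the hypothesis $Q_{2k}\varphi=0$ enters.

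I would prove this by induction on $n$, using $\psi_{k_0 k_1\cdots k_n}=\partial_{k_0}\psi_{k_1\cdots k_n}$ together with the operator identity $\mathcal{L}_{2k}\partial_{k_0}=\partial_{k_0}\mathcal{L}_{2k}-[\partial_{k_0},\mathcal{L}_{2k}]$. By Theorems \ref{thmL1}, \ref{thmL2}, \ref{thmL3} (see Corollary \ref{seec}) the commutator $[\partial_{k_0},\mathcal{L}_{2k}]$ is an $\mathcal{R}_\varphi$-linear combination of the $\partial_j$, so for any generator $[\partial_{k_0},\mathcal{L}_{2k}]\psi_{k_1\cdots k_n}=\sum_j c_j\psi_{jk_1\cdots k_n}\in\mathcal{R}_\varphi$. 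In the base case $n=2$, acting on $\psi_{k_0 s}=\partial_{k_0}\psi_s$, I would invoke the Cole--Hopf relations $\mathcal{L}_{2k}\psi_s=w_{2k,s}$ of \S\ref{S4}, which give $\mathcal{L}_{2k}\psi_{k_0 s}=\partial_{k_0}w_{2k,s}-[\partial_{k_0},\mathcal{L}_{2k}]\psi_s$; here $\partial_{k_0}w_{2k,s}\in\mathcal{R}_\varphi$ because each $w_{2k,s}$ is a $\mathbb{Q}[\lambda]$-linear combination of the $z_j$, the $\psi_c$ and the $\psi_{abc}$, so that $\partial_{k_0}$ turns each $z_j$ into a constant, each $\psi_c$ into a generator $\psi_{k_0 c}$, and each $\psi_{abc}$ into a generator $\psi_{k_0 abc}$. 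For the inductive step one has $\mathcal{L}_{2k}\psi_{k_0 k_1\cdots k_n}=\partial_{k_0}(\mathcal{L}_{2k}\psi_{k_1\cdots k_n})-[\partial_{k_0},\mathcal{L}_{2k}]\psi_{k_1\cdots k_n}$, the first term lying in $\mathcal{R}_\varphi$ since $\mathcal{L}_{2k}\psi_{k_1\cdots k_n}\in\mathcal{R}_\varphi$ by the induction hypothesis and $\partial_{k_0}$ preserves $\mathcal{R}_\varphi$, and the second term lying in $\mathcal{R}_\varphi$ as above.

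The main obstacle, and the reason the heat equations are indispensable, is precisely this base case: the first-order logarithmic derivatives $\psi_s$ and the coordinates $z_j$ themselves lie outside $\mathcal{R}_\varphi$, so $\mathcal{L}_{2k}\psi_s=w_{2k,s}$ is not an element of $\mathcal{R}_\varphi$. What rescues the argument is that one further differentiation $\partial_{k_0}$ eliminates the $z_j$ and raises every first-order $\psi_s$ to a second-order generator, so the boundary contributions fall back into $\mathcal{R}_\varphi$; crucially, the identities $\mathcal{L}_{2k}\psi_s=w_{2k,s}$ with $w_{2k,s}$ of this controlled shape hold only when $Q_{2k}\varphi=0$. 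Once the generators are shown to be preserved, the Leibniz rule propagates the invariance to all of $\mathcal{R}_\varphi$, completing the proof that $\mathscr{L}_\mathcal{L}$ is an algebra of derivations of $\mathcal{R}_\varphi$.
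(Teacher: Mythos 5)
Your proposal is correct and follows essentially the same route as the paper's (very terse) proof: apply $\partial_{k_0}$ to the Cole--Hopf relations $\mathcal{L}_{2k}\psi_s=w_{2k,s}$, use Corollary \ref{seec} to control the commutators $[\partial_{k_0},\mathcal{L}_{2k}]$, and observe that $\partial_{k_0}w_{2k,s}\in\mathcal{R}_\varphi$. Your explicit induction on the order $n$ of the logarithmic derivatives is just the careful spelling-out of that same argument.
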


Thus, the constructed operators solve Problem \ref{ppp}.

\begin{proof}
We apply the operators $\partial_k$ to differential operators from \S \ref{S4}. We use Corollary \ref{seec} and the observation that $\partial_k w_{2s,l} \in \mathcal{R}_\varphi$.
\end{proof}

\section{Connection with the problem of~differentiation of genus $g$ hyperelliptic functions} \label{S5}

For a meromorphic function $f$ in $\mathbb{C}^g$ the vector $\omega \in \mathbb{C}^g$ is a period if $f(z+\omega) = f(z)$ for all~$z \in \mathbb{C}^g$.
If the periods of a meromorphic function $f$ form a lattice $\Gamma$ of rank~$2g$ in~$\mathbb{C}^g$, then $f$ is called an \emph{Abelian function}.
We can say that Abelian functions are meromorphic functions on the complex torus $T^g = \mathbb{C}^g/\Gamma$. We denote the coordinates in~$\mathbb{C}^g$ by $z = (z_1, z_3, \ldots, z_{2g-1})$.
We consider hyperelliptic curves of genus $g$ in~the~model
\[
\mathcal{V}_\lambda = \{(x_2, x_{2g+1})\in\mathbb{C}^2 \colon
x_{2g+1}^2 = x_2^{2g+1} + \lambda_4 x_2^{2 g - 1}  + \lambda_6 x_2^{2 g - 2} + \ldots + \lambda_{4 g} x_2 + \lambda_{4 g + 2}\}. 
\]
The curve depends on the parameters $\lambda = (\lambda_4, \lambda_6, \ldots, \lambda_{4 g}, \lambda_{4 g + 2}) \in \mathbb{C}^{2 g}$.

Let $\mathcal{B} \subset \mathbb{C}^{2g}$ be the subspace of parameters such that the curve $\mathcal{V}_{\lambda}$ is nonsingular for~$\lambda \in \mathcal{B}$.
Then $\mathcal{B} = \mathbb{C}^{2g} \backslash \Sigma$, where $\Sigma$ is the discriminant hypersurface.

A \emph{hyperelliptic function of genus} $g$ (see \cite{B2, BEL18}) is a meromorphic function in $\mathbb{C}^g \times \mathcal{B}$,
such that for each $\lambda \in \mathcal{B}$ it's restriction on $\mathbb{C}^g \times \lambda$
is an Abelian function, where $T^g$ is~the Jacobian $\mathcal{J}_\lambda$ of the curve~$\mathcal{V}_\lambda$.
We denote by $\mathcal{F}$ the field of hyperelliptic functions of~genus~$g$. For the properties of this field, see \cite{BEL18}. 

Let $\mathcal{U}$ be the total space of the bundle $\pi: \mathcal{U} \to \mathcal{B}$ with fiber the Jacobian~$\mathcal{J}_\lambda$ of~the~curve~$\mathcal{V}_\lambda$ over $\lambda \in \mathcal{B}$.
Thus, we can say that hyperelliptic functions of genus~$g$ are~meromorphic functions in~$\mathcal{U}$.
By Dubrovin--Novikov Theorem~\cite{DN}, there is a~birational isomorphism between $\mathcal{U}$ and the complex linear space~$\mathbb{C}^{3g}$.

We will need the theory of hyperelliptic Kleinian functions (see \cite{BEL, BEL-97, BEL-12, Baker}, as~well~as~\cite{WW} for elliptic functions).
Take the coordinates
$(z, \lambda) = (z_1, z_3, \ldots, z_{2 g -1},$ $\lambda_4, \lambda_6, \ldots, \lambda_{4 g}, \lambda_{4 g + 2})$
in $\mathbb{C}^g \times \mathcal{B} \subset \mathbb{C}^{3g}$.
Let $\sigma(z, \lambda)$ be the hyperelliptic sigma function (or elliptic function in~case of genus $g=1$). It is defined on the universal cover of $\mathcal{U}$. As before, we set $\partial_k = {\partial \over \partial z_k}$.
Following \cite{B2, BEL18, B3}, we use the notation
\[
\zeta_{k} = \partial_k \ln \sigma(z, \lambda), \qquad
\wp_{k_1, \ldots, k_n} = - \partial_{k_1} \cdots \partial_{k_n} \ln \sigma(z, \lambda),
\]
where $n \geqslant 2$, $k_s \in \{ 1, 3, \ldots, 2 g - 1\}$. 
The functions $\wp_{k_1, \ldots, k_n}$ are examples of hyperelliptic functions.

For the problem of constructing a Lie algebra of derivations of $\mathcal{F}$, a general approach to the solution was developed in \cite{BL0, BL}. In \cite{FS, B2, B3} an explicit solution to this problem was obtained for $g=1,2,3$.
According to \cite{Nonhol}, the function $\sigma(z, \lambda)$ is a solution to the~system of~heat equations~\eqref{e3}. Thus, we obtain a solution to this problem as a~corollary of~Theorem~\ref{Tdif}:

\begin{thm} 
For $g = 1,2,3$ the algebra $\mathscr{L}_\mathcal{L}$ for $\varphi = \sigma$ is an algebra of derivations of~$\mathcal{F}$.
\end{thm}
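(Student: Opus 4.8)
The plan is to obtain the statement as a direct corollary of Theorem~\ref{Tdif}, once the ring $\mathcal{R}_\sigma$ has been related to the field $\mathcal{F}$ by passing to fractions. First I would invoke the result of \cite{Nonhol} recalled just above the theorem: for each $g = 1,2,3$ the sigma function $\sigma(z,\lambda)$ is a solution of the system of heat equations \eqref{e3}. Hence $\varphi = \sigma$ satisfies the hypothesis of Theorem~\ref{Tdif}, which then asserts that the Lie--Rinehart algebra $\mathscr{L}_\mathcal{L}$ of Corollary~\ref{L5}, with framing $\{\mathcal{L}_{2k}\}$ and $\{\partial_k\}$, acts by derivations on the graded ring $\mathcal{R}_\sigma$. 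For $\varphi = \sigma$ the generators $\psi_{k_1\ldots k_n} = -\partial_{k_1}\cdots\partial_{k_n}\ln\sigma$ of $\mathcal{R}_\sigma$ coincide with the Kleinian functions $\wp_{k_1,\ldots,k_n}$, $n \geqslant 2$, which are genuine hyperelliptic functions; thus $\mathcal{R}_\sigma \subset \mathcal{F}$.

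The essential external input is the classical description of $\mathcal{F}$ from the theory of hyperelliptic Kleinian functions \cite{BEL18, B2, B3}: for $g = 1,2,3$ the field $\mathcal{F}$ of hyperelliptic functions is exactly the field of fractions of the integral domain $\mathcal{R}_\sigma$ generated over $\mathbb{Q}[\lambda]$ by the $\wp_{k_1,\ldots,k_n}$. Granting this, I would use the elementary fact that any derivation $D$ of an integral domain extends uniquely to its field of fractions by the quotient rule $D(a/b) = (D(a)\,b - a\,D(b))/b^2$. Applying this to each operator of the framing $\{\mathcal{L}_{2k}, \partial_k\}$ produces derivations of $\mathcal{F}$; and since the structure relations of Theorems~\ref{thmL1}--\ref{thmL3} have all coefficients in $\mathcal{R}_\sigma \subset \mathcal{F}$, they persist after extension. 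Therefore $\mathscr{L}_\mathcal{L}$ acts by derivations on $\mathcal{F}$ and retains its Lie algebra structure, which is precisely the assertion of the theorem.

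The main obstacle is the identification $\mathcal{F} = \mathrm{Frac}(\mathcal{R}_\sigma)$: one must be sure that $\mathcal{F}$ is not strictly larger, i.e. that every hyperelliptic function of genus $g \leqslant 3$ is a rational expression in the $\wp$-functions and the parameters $\lambda$. This is exactly where the restriction $g \leqslant 3$ and the full strength of the Kleinian theory enter, and I would quote it from \cite{BEL18} (and in the specific normalization of \cite{B2, B3}) rather than reprove it here. Once that identification is in place, the remaining extension-of-derivations argument is routine, and the result follows from Theorem~\ref{Tdif}.
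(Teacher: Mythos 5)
Your proposal is correct and follows essentially the same route as the paper: apply Theorem~\ref{Tdif} to $\varphi=\sigma$ (which solves the heat system by \cite{Nonhol}), observe that $\mathcal{R}_\sigma$ contains the generators of $\mathcal{F}$ (the paper cites the specific generating set $\wp_{1,k}$, $\wp_{1,1,k}$, $\wp_{1,1,1,k}$ from \cite{BEL-97} and \cite[Theorem 2.1]{Bp}), and extend the derivations from the ring to the field. Your write-up merely makes explicit the quotient-rule extension step that the paper leaves implicit.
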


\begin{proof}
The ring $\mathcal{R}_\sigma$ lies in $\mathcal{F}$.
The functions $\wp_{1,k}$, $\wp_{1,1,k}$ and $\wp_{1,1,1,k}$ are generators of~the~field~$\mathcal{F}$ of hyperelliptic functions of genus~$g$ (see \cite{BEL-97}, \cite[Theorem 2.1]{Bp}).
Thus,~$\mathcal{R}_\sigma$ contains all the generators of $\mathcal{F}$. By Theorem \ref{Tdif}, the algebra $\mathscr{L}_\mathcal{L}$ for $\varphi = \sigma$ is an algebra of derivations of the ring $\mathcal{R}_\sigma$, and, therefore, of the field $\mathcal{F}$.
\end{proof}

\end{document}